\newtheorem{theorem}{Theorem}[section]
\newtheorem{corollary}{Corollary}[theorem]
\begin{document}

\twocolumn[

\aistatstitle{Policy Learning for Optimal Individualized Dose Intervals}

\aistatsauthor{ Guanhua Chen \And Xiaomao Li \And  Menggang Yu }

\aistatsaddress{ University of Wisconsin, Madison \\ gchen25@wisc.edu \And University of Wisconsin, Madison  \\ gilbert.xm.li@gmail.com \And University of Wisconsin, Madison  \\ meyu@biostat.wisc.edu} 
]


%

\begin{abstract}
	We study the problem of learning individualized dose intervals using observational data. There are very few previous works for policy learning with continuous treatment, and all of them focused on recommending an optimal dose rather than an optimal dose interval. In this paper, we propose a new method to estimate such an optimal dose interval, named probability dose interval (PDI). The potential outcomes for doses in the PDI are guaranteed better than a pre-specified threshold with a given probability (e.g., $50\%$). The associated nonconvex optimization problem can be efficiently solved by the Difference-of-Convex functions (DC) algorithm. We prove that our estimated policy is consistent, and its risk converges to that of the best-in-class policy at a root-n rate. Numerical simulations show the advantage of the proposed method over outcome modeling based benchmarks. We further demonstrate the performance of our method in determining individualized Hemoglobin A1c (HbA1c) control intervals for elderly patients with diabetes.
\end{abstract}

\section{INTRODUCTION}

Increasing evidence has shown that personalized prescriptions based on patients’ health conditions and medical history will
result in more effective treatment \citep{mancinelli2000pharmacogenomics,collins2015new}. Hence, it is desirable to learn personalized treatment policy based on collected data and recommend treatment for future patients based on such a policy. 
%
We focus on off-policy learning, which is about evaluating and optimizing new personalized treatment policies based on observational data.

Despite the abundance of research on policy learning for categorical treatments \citep{Liang2020, Lou2017, Chen2017, kosorok2019precision}, policy learning for continuous treatments has not been studied sufficiently. Meanwhile, many applications in the medical domain involve determining optimal continuous treatment: the optimal dosage of a medical drug, e.g., Warfarin \citep{international2009estimation, rich2014simulating}, and the optimal length of stay in the hospital for hospitalized patients \citep{borghans2012length}. Existing approaches for continuous treatments focused on recommending a fixed optimal dose for each patient. However, for many medical applications, recommending optimal dose interval can be more practical. For example, for chronic disease management metrics (e.g. HbA1c level, blood pressure, and cholesterol), targeted intervals, instead of a particular level, are frequently recommended from major medical organizations such as the American Diabetes Association \citep{american2019management} and the American Heart Association \citep{flack2020blood}. Another example is cancer radiation therapy, where radiation received by targeted tumors, not delivered by physicians, can be controlled only at an interval due to the irregular shape of solid tumors and the need to radiate surrounding regions \citep{scott2017genome}. Therefore we focus on policy learning for optimal recommendation of an interval instead of a single dose. In addition, this shifted focus will help relax the local approximation needed for optimal dose finding, therefore improving both computational stability and theoretical properties, including a faster convergence rate for the target function over an existing method \citep{chen2016personalized}.

\section{PROBLEM SETUP}
\subsection{Settings and Assumptions}
Denote the dataset with $n$ observations by $(X_i, A_i, Y_i)$, $i = 1, \ldots, n$. Here, $X_i \in \mathcal{X}$ is a $d$-dimensional vector representing patient-level covariates. $A_i \in \mathcal{A}\equiv [a_L, a_U]$ is the observed treatment dose whose range has the lower and upper bounds ($a_L$ and $a_U$). $Y_i$ is the observed outcome from receiving $A_i$. We adopt the well-known potential outcome notation in causal inference \citep{imbens2015causal} to articulate our research problem. For any dose $a\in \mathcal{A}$, the potential outcome $Y_i(a)$ is the outcome that would have been observed if patient $i$ received treatment $a$. Note that when $A_i = a$, naturally $Y_i = Y_i(a)$ by the causal consistency assumption \citep{imbens2015causal}. An individualized policy or dose assignment rule is a functional $f(X): \mathcal{X} \rightarrow \mathcal{A}$. For simplicity of presentation, we assume bigger $Y(a)$ is better. The optimal policy is the one that maximizes the policy value $V(f)\equiv E[Y(f(X))]$. To estimate this causal quantity using the observed data, we adopt the following three commonly assumed conditions \citep{imbens2015causal}: 1) the stable unit treatment value assumption (SUTVA); 2) the no unmeasured confounding assumption: $Y(a)\perp A| X$; and 3) the positivity assumption which assumes $P(A=a|X) >0$ for all $a\in \mathcal{A}$. $P(A=a|X)$ is known as the (generalized) propensity score function that models the treatment assignment mechanism  \citep{imbens2015causal}. 

An Individualized Dose Interval (IDI) is desirable for a patient if the doses inside the interval are more likely associated with better outcomes than doses outside the interval. Each IDI is therefore connected with a threshold $S$ that separates good outcomes from bad ones for the patient. In addition, a confidence probability $\alpha$ reflects how strongly the desirable interval is associated with the good outcomes. Specifically, let $f_L(x)$ and $f_U(x)$ be the lower and upper bounds of the IDI for a patient with covariates $x$ such that $a_L\le f_L(x)\le f_U(x)\le a_U,\;\forall x\in \mathcal{X}$.
Then, we define a level-$\alpha$ \textit{Probability Dose Interval} (PDI) as follows:
\begin{equation} \label{eq:pdi-define}
	\begin{aligned}
		&	\text{PDI}_{\alpha}(x) \equiv  \big[f_L(x),f_U(x)\big]: \ \forall  a\in[f_L(x),f_U(x)], \\
		&	P\{Y(a)>S|x\} \geq \alpha. 
	\end{aligned}
\end{equation}
If a clinician follows the PDI recommendation when treating a patient with covariates $x$, the patient is guaranteed to have the outcome above $S$ with a probability of at least $\alpha$.

\subsection{Related work}
\textbf{Indirect methods}: Indirect methods for estimating treatment rules are based on outcome modeling \citep{zhao2009reinforcement, moodie2012q, chakraborty2013statistical}. This framework has two steps. The first step models the conditional expectation of the outcome given the treatment and covariates, $E(Y|A, X)$. The models can be parametric such as regression with Lasso-type regularization \citep{tibshirani1996regression},  or nonparametric such as regression tree \citep{breiman1984classification,breiman2001random} and Support Vector Machine \citep{steinwart2008support,vapnik2013nature}. In the second step, the treatment option which maximizes the conditional expection in the first step is set as the optimal treatment. For most cases, the second step involves numerical optimization such as grid search due to lack of close-form solutions. Since there are no existing approaches for PDI learning, we propose the following indirect PDI learning approach to serve as a benchmark. In particular, the first step for indirect PDI learning would construct an outcome model to estimate the conditional probability $P\big(Y>S\,|\,A=a,X=x\big)$, for example, based on a classifier for predicting the binary label $I\{Y>S\}$ using $A$ and $X$. Subsequently, a grid search for doses is done  to identify the dose interval where the predicted conditional probability is greater than the confidence level  $\alpha$. It has been noted in binary treatment settings that solutions from the indirect methods can be unstable due to the need to model  $P\big(Y>S\,|\,A=a,X=x\big)$ for all $a$ and $x$ \citep{kosorok2019precision}.  Furthermore, the grid search step could be computationally challenging.

\textbf{Direct methods}: On the other hand, for optimal policy learning with categorical treatments, direct methods have been developed to learn the optimal policy directly, without using the outcome modeling. These approaches \citep{moodie2009estimating, Xu2015, Lou2017,Chen2017,Liang2020} are believed to have better performance when signal-to-noise ratio is low. For learning optimal policy for continuous treatments, \cite{chen2016personalized} proposed an outcome weighted learning (O-learning) method which is based on a weighted regression. In particular, O-learning maximizes the value function associated with a policy $f$: i.e. $f_{opt} = \text{argmax}_{f} V(f)$ with $V(f)=E[Y(f(X))]$. Furthermore, \cite{chen2016personalized} and other direct methods \citep{kallus2018policy, bica2020estimating, sondhi2020balanced, zhu2020kernel, schulz2020doubly,zhou2021parsimonious} use inverse probability weighted (IPW) or augmented IPW estimators of the value function to address confounding issues by reweighting the observational data to mimic randomized trials. However, none of these approaches are directly applicable for the PDI problem.

\section{DIRECT POLICY LEARNING FOR PDI}\label{Estimation}
Because the recommended dose interval needs to satisfy probability constraint $P\big(Y (a)>S\,|\,X=x\big) \geq \alpha$ for $a \in \text{PDI}_{\alpha}(x)$, maximizing the associated value function for a dose interval becomes less intuitive. Instead, we propose a loss function such that the estimated optimal policy that minimizes the loss function would (asymptotically) satisfy the constraint.

There are two types of errors that an PDI can incur. When a patient receives a dose inside the PDI, $a \in\text{PDI}_{\alpha}(x)$, but has an outcome below the threshold, $Y(a)<S$. It presents a false-positive error. On the other hand, when the patient receives a dose outside the interval, $a \not \in\text{PDI}_{\alpha}(x)$, but has an outcome above the threshold, $Y(a)>S$. It presents a false-negative error. The goal is to minimize these two errors. Instead of working with a multi-objective optimization problem, we use $\alpha$ to balance the magnitudes of the two errors and work with the weighted sum, that is, 
$\alpha I\{Y(a) \leq S, a\in\text{PDI}_{\alpha}(x)\} + (1-\alpha)I\{Y(a) > S, a\not \in\text{PDI}_{\alpha}(x)\}$. 
This weighted sum leads to nice theoretical guarantee of the resulting interval for our proposed method as we demonstrate in Section \ref{Theory} below.

The indirect methods rely on modeling the outcome $Y$ to predict $I(Y(a)>S)$. Instead, we circumvent modeling of the outcome and directly minimize the above error based on observed data. Specifically, we consider the following risk function for minimization to obtain an estimate for $\text{PDI}_{\alpha}(x)$ with any given $\alpha$, and $S$. 
\begin{equation}
	\begin{aligned}  \label{risk-PDI}
		& E\bigg[\Big\{\alpha I(Y\leq S) I(A\in \text{PDI}_{\alpha}(X)) \\ 
		& +(1-\alpha) I(Y>S) I(A\notin \text{PDI}_{\alpha}(X))\Big\} \frac{1}{P(A\,|\,X)}\bigg]
	\end{aligned}
\end{equation}
Here $P(A\,|\,X)$ is the generalized propensity score account for possible nonrandom treatment assignment  in observational studies. 

Under mild conditions, the minimizer of the risk function \eqref{risk-PDI} becomes an interval $[f_L(x),f_U(x)]$ where $f_L(x)$ and $f_U(x)$ are boundary functions that minimize the following risk
\begin{equation}
	\begin{aligned}
		& R(f_L,f_U) = E\bigg[\Big\{\alpha I(Y\leq S) I(A\in [ f_L(X),f_U(X) ])  \\
		&  +(1-\alpha)I(Y>S) I(A\notin [f_L(X),f_U(X)])\Big\} \frac{1}{P(A\,|\,X)}\bigg]\label{two-sided PDI}
	\end{aligned}
\end{equation}

In cases when Conditions C1 and C2 defined in Section~\ref{Theory} are satisfied, then only the lower bound is needed to separate the desirable doses from the undesirable doses, i.e., $\text{PDI}_{\alpha}(x)=[f_L(X),a_U]$. Accordingly, we just need to minimize the following to estimate $f_L(x)$.
\begin{equation}
	\begin{aligned}
		&	R(f_L) =	E\bigg[\Big\{\alpha I(Y \leq S) I\big(A\in[f_L(X),a_U]\big) \\
		&	 +(1-\alpha)I(Y>S) I\big(A\in [a_L,f_L(X)]\big)\Big\}\frac{1}{P(A\,|\,X)}\bigg]\label{one-sided PDI}
	\end{aligned}
\end{equation}

The functional forms for the boundary functions $f_L(x)$ and $f_U(x)$ can be chosen quite flexibly in principle, as in many machine learning approaches. In practice, one may put some 
some smoothness conditions on them for given $x$ to ensure that a small perturbation of $x$ will not lead to a dramatic change in the resulting PDI.

\subsection{Non-Convex Surrogate Loss Relaxation}\label{Non-Convex Relaxation}

As in the classical classification tasks, it is difficult to directly optimize the empirical risk due to the discontinuity of indicator functions in \eqref{two-sided PDI} and \eqref{one-sided PDI}. Surrogate loss functions have been proposed to replace the indicator functions as long as the surrogate loss functions satisfy certain conditions such as Fisher consistency. Note that unbounded loss functions such as the hinge loss are not good options here as demonstrated in the optimal dose problem \citep{chen2016personalized}. The reason is due to the sensitivity of the solution of the unbounded surrogate loss to outliers, that is, to observations whose received doses are far from the optimal bounds. Therefore we propose a truncated hinge loss as a surrogate similar to \cite{chen2016personalized}. Such non-convex surrogate losses have also been used in robust SVM \citep{wu2007robust}.  

The risk function \eqref{one-sided PDI} of the one-sided $\text{PDI}_{\alpha}(X)$ after relaxation is as follows. See Figure~\ref{Psi} for a graphical representation.
\begin{equation}\label{relax_1sided}
	\begin{aligned}
		& R_{\epsilon}(f_L)=E \bigg[\Big\{ \alpha I\big(Y \leq S\big) \Psi_\epsilon(A,{f}_L) \\
		& +\big(1-\alpha\big)I\big(Y > S \big)  \Psi_\epsilon({f}_L ,A) \Big\} \frac{1}{P(A\,|\,X)}\bigg]
	\end{aligned}
\end{equation} 
where $\Psi_\epsilon(a,b)=min\{{\epsilon}^{-1} {(a-b)_+},1\}.$  Note that  $\Psi_\epsilon(a,b)= \Psi_{\epsilon,1}(a,b) - \Psi_{\epsilon,2}(a,b)$, where  $\Psi_{\epsilon,1}(a,b)=\{{\epsilon}^{-1}{(a-b)}\}_+$ and $\Psi_{\epsilon,2}(a,b)=\{{\epsilon}^{-1}{(a-b)}-1\}_+$ are convex.

The risk function \eqref{two-sided PDI} of the two-sided $\text{PDI}_{\alpha}(X)$ after relaxation is as follows.
\begin{equation}\label{relax_2sided}
	\begin{aligned}
		& R_{\epsilon}(f_L,f_U) =E \bigg[\Big\{  \alpha I \big(Y \leq S \big) \Psi_\epsilon^{in}\big({f}_L ,A,f_U \big) \\ 
		& +  \big(1-\alpha\big) I\big(Y > S\big) \Psi_\epsilon^{out}\big({f}_L,A,f_U \big) \Big\}  \frac{1}{P(A\,|\,X)}\bigg]
	\end{aligned}
\end{equation}
where 
$$\Psi_\epsilon^{in}\big(a,b,c\big)=
\begin{cases}
	0, & b<a<c\\
	(b-a)/\epsilon, & a<b<a+\epsilon<c\\
	1, &a+\epsilon<b<c-\epsilon\\
	(c-b)/\epsilon, &a<c-\epsilon<b<c\\
	0, &a<c<b \\
\end{cases}
\\
\text{, and} 
$$ 
$\Psi_\epsilon^{out}\big(a,b,c\big)=1-\Psi_\epsilon^{in}\big(a,b,c\big).$

\begin{figure*}[!ht]
	\centering
	\includegraphics[width=10cm]{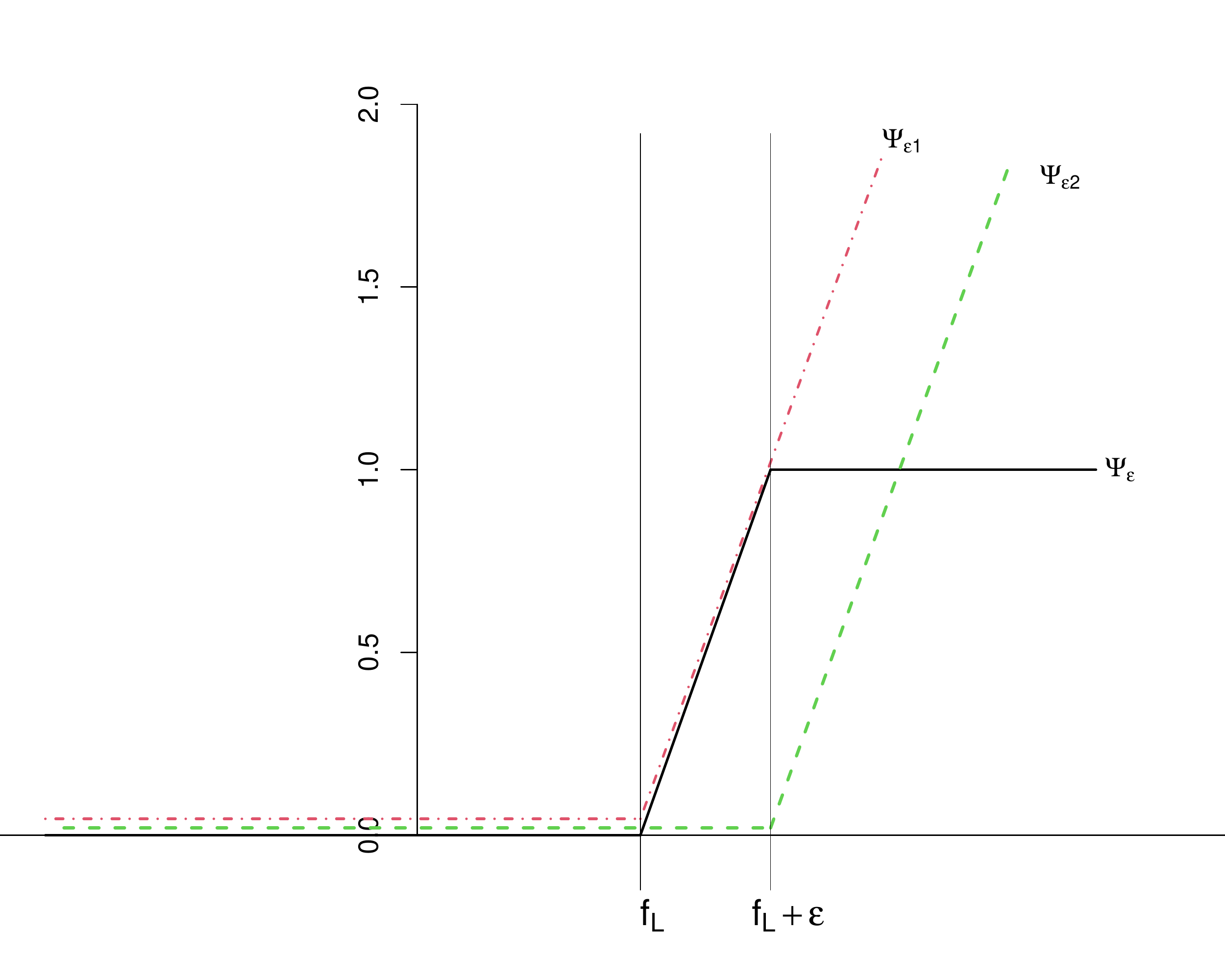}
	\caption[Decomposition of Non-convex Loss Function $\Psi_{\epsilon}$]{Loss Functions $\Psi_\epsilon$,$\Psi_{\epsilon, 1}$ and $\Psi_{\epsilon,2}$} \label{Psi}
\end{figure*}

\subsection{DC Algorithm for One-sided PDI}\label{DC}
Assume that $f_L$ belongs to a reproducing kernel Hilbert space (RKHS) $\mathcal{H}$ \citep{shawe2004kernel}. We propose minimizing the following regularized empirical risk of $R_{\epsilon}(f_L)$ corresponding to  \eqref{relax_1sided}: 
\begin{align} %
	\hat{f}_{L,n} &= \arg\min_{f_L \in \mathcal{H}}\sum_{i=1}^n\bigg[\Big\{ \alpha I(Y_i \leq S)\Psi_\epsilon(A_i, f_L(X_i)) \nonumber \\ 
	& + \big(1-\alpha\big)I(Y_i > S)\Psi_\epsilon( f_L(X_i),A_i) \Big\} \frac{1}{P(A_i|X_i)}\bigg]  \nonumber \\ &+ \frac{\lambda_n}{2}\|f_L\|_{\mathcal{H}},  \label{regularized-empirical-risk}
\end{align}
where the regularization term  $\frac{\lambda_n}{2}\|f_L\|_{\mathcal{H}}^2$  penalizes the complexity of $f_L$ and $\| \cdot \|_{\mathcal{H}}^2$ is a norm defined in $\mathcal{H}$.  

Optimizing such an objective function is challenging due to the nonconvexity of the loss function. We adopt the Difference-of-Convex-functions (DC) algorithm \citep{le1997solving} to iteratively solve a series of convex minimization problems. For simplicity, assume $f_L(X_i; \mathbf{v}) = \sum_{j=1}^{n}v_j k(X_i, X_j) + v_0$ where $k$ is the reproducing kernel  of $\mathcal{H}$ and denote $\mathbf{v} = (v_1,.....,v_n, v_0)$ for the unknown parameters. We can write the regularized empirical risk in \eqref{regularized-empirical-risk} as
\begin{align}
	\mathcal{L}(\mathbf{v}) & \equiv \sum_{i=1}^n\bigg[\Big\{ \alpha I(Y_i \leq S)\Psi_\epsilon\big(A_i, f_L(X_i; \mathbf{v}) \big) \nonumber \\
	& +\big(1-\alpha\big) I(Y_i > S) \Psi_\epsilon\big( f_L(X_i; \mathbf{v}) ,A_i\big) \Big\} \frac{1}{P(A_i|X_i)}\bigg] \nonumber\\
	&+ \frac{\lambda_n}{2} \mathbf{v} ^\top \mathbf{K}  \mathbf{v} \label{regularized-emp-risk2}
\end{align}
where $\mathbf{K}$ is a $n \times n$ matrix with $k(X_i,X_j) $ ($ i = 1 \ldots n; j = 1, \ldots, n$) as its elements. Let $\mathcal{L}_1(\mathbf{v})$ and $\mathcal{L}_2(\mathbf{v})$ be similarly defined as $\mathcal{L}(\mathbf{v})$ except with $\Psi_\epsilon$ replaced with $\Psi_{\epsilon,1}$ and $\Psi_{\epsilon,2}$ (without the regularization term) respectively in \eqref{regularized-emp-risk2}. Then $\mathcal{L}(\mathbf{v})=\mathcal{L}_1(\mathbf{v})-\mathcal{L}_2(\mathbf{v})$ and the non-convex objective function $\mathcal{L}(\mathbf{v})$ is the difference of two convex functions $\mathcal{L}_1(\mathbf{v})$ and $\mathcal{L}_2(\mathbf{v})$. By replacing the second function $\mathcal{L}_2(\mathbf{v})$ by its first order Taylor expansion expanded around the solution from previous iteration and the fact that a linear function is convex and concave, the DC algorithm solves a convex optimization problem at each iteration.

In summary, the DC algorithm estimates $\mathbf{v}$ by repeatedly solving the following optimization problem
\begin{eqnarray*}
	\mathbf{v}^{t+1} & =\arg\min_\mathbf{v}\hskip0.01\textwidth \mathcal{L}_1(\mathbf{v})-\{\nabla \mathcal{L}_2(\mathbf{v}^t)\}^\top(\mathbf{v}- \mathbf{v}^t) \\
	&	= \arg\min_\mathbf{v} \hskip0.01\textwidth \mathcal{L}_1(\mathbf{v})-\{\nabla \mathcal{L}_2(\mathbf{v}^t)\}^\top \mathbf{v}, \nonumber 
\end{eqnarray*}
where $\mathbf{v}^{t+1}$ and $\mathbf{v}^{t}$ are solutions of the DC-algorithm at $t+1$-th and $t$-th iterations, respectively.
The optimization problem above is a quadratic programming problem which can be solved by standard packages. The details of the derivations are documented in the supplementary. The convergence property of the DC algorithm has been shown in \cite{lanckriet2009convergence} such that the solution is guaranteed to converge to a local minimum. 

So far, we only discussed the case of one-sided PDIs with a lower bound function. In the case where an upper bound is desired, one could also utilize the algorithm described above with a minor modification. By taking the negative value of the dose, the upper bound of the original dose will become the lower bound for the negative dose.
To find two-sided PDIs, one can use the following strategy to convert it into two separate one-sided PDI problems. Recall that the optimal dose function $f_{opt}=\arg\max_{f}V(f)$  defined in Section $2.2$ and assume that $f_{L,opt}(x) < f_{opt}(x)< f_{U,opt}(x)$ for all $x\in \mathcal{X}$. Then, we can use samples with observed $A < f_{opt}(x)$ to learn the lower bound and samples with observed $A > f_{opt}(x)$ to learn the upper bound. To estimate the unknown function $f_{opt}(x)$, we can use the existing methods for optimal dose finding \citep{chen2016personalized, kallus2018policy, zhu2020kernel, zhou2021parsimonious}. The choice of which PDI (one-sided or two-sided) to use usually depends on practical considerations. When recommending HbA1c level for diabetes patients, because over-control HbA1c is the main concern, learning a one-sided PDI is sufficient (see Section~\ref{RealData} for details). In other settings, e.g. optimal personalized dose intervals for the Warfarin drug (an anticoagulant drug), a two-sided PDI may be preferred as overdosing on Warfarin will increase risk of blooding, and underdosing will increase risk of stroke. 

Note that PDIs depend on the threshold $S$ and the probability $\alpha$. The combinations of these quantities with different values could lead to different interpretations of the resulting PDIs. The pre-determined threshold $S$ that defines the favorable outcomes can also be chosen as dependent on some covariates $z\subset x$, e.g., age or age groups. 

\subsection{Generalized Propensity Score for Continuous Treatment}
\label{section:GPS}

To conduct off-policy learning for a continuous treatment, we need to estimate $P(a\,|\,x)$ which is the conditional density of dose $a$ given the covariates $x$ \citep{austin2018assessing} and weigh each observation using the inverse of the estimated $P(a\,|\,x)$. Multiple approaches are available for estimating $P(a\,|\,x)$. One natural approach is regression. In particular, one could assume the density is from parametric or nonparametric distributions \citep{austin2018assessing}. This approach usually performs satisfactorily when the distributional assumptions are reasonable. For our simulation settings as well as the data application, the approach led to quite unstable inverse weights. To remedy the problem, we adopt recent methods that directly estimate the weights (instead of the inverse of estimated $P(a\,|\,x)$) that could balance the covariates across the range of a continuous treatment \citep{fong2018covariate, kallus2019kernel, tubbicke2020entropy, huling2021independence}. In particular, we use a modified version of Kernel Optimal Orthogonality Weighting \citep{kallus2019kernel} and the independence-inducing weights by \cite{huling2021independence} for our numerical analysis due to their stable finite sample performance.

\section{THEORETICAL RESULTS}\label{Theory}

We include results for one-sided PDIs here and leave those for two-sided PDIs to the supplement. If $[f_L(x), a_U]$ is a one-sided level-$\alpha$ PDI, then it is reasonable to assume $[f_L(x) + \eta(x), a_U]$ is also a one-sided level-$\alpha$ PDI for any $\eta(x) \ge 0$ as it is a narrower interval. We ensure this by making the following assumptions throughout the section:

(C1) For any $x$, the boundaries $a_L$ and $a_U$ for the dose range satisfy $P\{Y(a_L)>S \big |x\}\leq \alpha$ and $P\big\{Y(a_U)>S |x \big\}\geq  \alpha$.\\ [8pt]
(C2) For any $x$, there is a unique solution $a_x \in [a_L,a_U]$ such that  $P\{Y(a_x)>S |x\} = \alpha$.

The first condition (C1) ensures that our choices of $S$ and $\alpha$ are reasonable in the sense that the level-$\alpha$ is attainable. The second condition (C2) ensures that the limit of our learning algorithm is well defined. Details of the proof can be found in the supplementary file. Given (C1) and (C2), our solution $f_L(x)$ can attain the lowest possible value (i.e., the PDI is as wide as possible). 

Note that if $P(Y > S| A = a, X=x)$ is a monotonic function in $a$ for all $x$, i.e.
$P(Y > S | A =a_1, X=X ) \leq P(Y > S | A= a_2, X=X)$, $\forall a_1,a_2  \in  \text{PDI}_{\alpha}(X)$ and $a_1 < a_2$, then Conditions (C1) and (C2) will hold with properly chosen $\alpha$ and $S(X)$. However, (C1) and (C2) do not require  $P(Y(a) > S|X=x)$ to be monotonic. In other words, the monotonicity is  sufficient but not necessary for (C1) and (C2) to hold. We make assumptions that (C1) and (C2) hold to guarantee the existence of $\text{PDI}_{\alpha}(X)$. Hence, our method can work with multi-modal outcomes as long as (C1) and (C2) are satisfied. On the other hand, for many medical applications, one can safely expect $P(Y>S|a,X)$ to be  monotonic or unimodal in $a$.  A unimodal $P(Y>S|a,X)$ will require two-sided PDIs. 


\subsection{Fisher Consistency}\label{consistency} Theorem \ref{THMconsist-p-1} below shows that minimizing $R(\cdot)$ defined in \eqref{one-sided PDI} will result in a one-sided PDI lower bound, above which the doses will produce outcomes larger than $S$ with a probability of at least $\alpha$. 


\begin{theorem}\label{THMconsist-p-1}
	Let $f_{L,opt}= \arg\min_f R(f)$. Then, under the assumptions (C1) and (C2), $f_{L,opt}(x)$ has the following two properties:\\ [8pt]
	(1) $P\big\{Y\big(f_{L,opt}(x)\big)> S \, |\, x \big\}= \alpha, \forall x$. \\ [8pt]
	(2) For any measurable function $a(\cdot)$ s.t. $f_{L,opt}(x)\leq a(x)\leq  a_U$,  $\forall x\in \mathcal{X}$, the potential outcome $Y(a(x))$ satisfies $P(Y(a(x)) > S|x)\geq \alpha$.
\end{theorem}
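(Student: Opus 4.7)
The plan is to reduce the risk $R(f_L)$ to an $x$-wise minimization problem via standard causal-inference identities, then identify the minimizer from the first-order condition. Using SUTVA, no unmeasured confounding, and positivity, together with the fact that the inverse of $P(A\mid X)$ cancels the conditional density of $A$ given $X$, I would rewrite
\begin{equation*}
R(f_L)=E_X\!\left[\alpha\!\int_{f_L(X)}^{a_U}\!\!\!\!\bigl(1-p(a,X)\bigr)\,da+(1-\alpha)\!\int_{a_L}^{f_L(X)}\!\!\!\!p(a,X)\,da\right],
\end{equation*}
where $p(a,x)\equiv P\{Y(a)>S\mid x\}$. The point is that each piece of the indicator-based loss in \eqref{one-sided PDI} becomes an expectation of $I(Y(a)\gtrless S)$ conditional on $X$ after applying the IPW identity, and the $\int P(a\mid x)/P(a\mid x)\,da$ structure leaves a Lebesgue integral on $[a_L,a_U]$.

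Because the integrand depends on $f_L$ only through $f_L(X)$, the minimization can be carried out pointwise in $x$. Fix $x$ and set
\begin{equation*}
H(t;x)=\alpha\!\int_{t}^{a_U}\!\!\bigl(1-p(a,x)\bigr)da+(1-\alpha)\!\int_{a_L}^{t}\!\!p(a,x)\,da,\qquad t\in[a_L,a_U].
\end{equation*}
Differentiating under the integral sign, $\partial_tH(t;x)=p(t,x)-\alpha$. Condition (C2) guarantees a unique $a_x\in[a_L,a_U]$ with $p(a_x,x)=\alpha$, and (C1) gives $\partial_tH(a_L;x)=p(a_L,x)-\alpha\le 0$ and $\partial_tH(a_U;x)=p(a_U,x)-\alpha\ge 0$, so $a_x$ is an interior (or boundary) stationary point and necessarily the global minimizer on $[a_L,a_U]$. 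Hence $f_{L,opt}(x)=a_x$ and part (1) follows immediately from $p(a_x,x)=\alpha$.

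For part (2), I would argue that the sign of $p(a,x)-\alpha$ on $[f_{L,opt}(x),a_U]$ must be nonnegative. If $p(a,x)-\alpha$ were negative at some $a^\star\in(f_{L,opt}(x),a_U]$, then by (C1) $p(a_U,x)\ge\alpha$ together with the continuity of $p(\cdot,x)$ (implicit in (C2), since uniqueness of a solution to $p(\cdot,x)=\alpha$ is otherwise uninformative) would force, by the intermediate value theorem, a second root of $p(\cdot,x)=\alpha$ strictly between $a^\star$ and $a_U$, contradicting (C2). Therefore $p(a,x)\ge\alpha$ for every $a\in[f_{L,opt}(x),a_U]$, and for any measurable $a(\cdot)$ with $f_{L,opt}(x)\le a(x)\le a_U$ we conclude $P\{Y(a(x))>S\mid x\}=p(a(x),x)\ge\alpha$.

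The step I expect to be the main obstacle is clause (2): without monotonicity of $p(\cdot,x)$, one must rule out the scenario in which $p$ dips below $\alpha$ in the interior of $[f_{L,opt}(x),a_U]$ before rising back above $\alpha$ by $a_U$. The unique-root hypothesis in (C2), combined with the boundary inequalities in (C1) and an intermediate-value argument, is exactly the leverage needed to exclude this; I would make this explicit so that the proof does not silently invoke monotonicity. A minor technical wrinkle is justifying differentiation under the integral, which should follow from dominated convergence since $|1-p|,|p|\le 1$ and the integration domain is bounded.
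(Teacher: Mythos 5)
Your proposal is correct and is essentially the paper's own argument in a different packaging: the paper computes $R(f')-R(f^*)$ directly and, after the same IPW/unconfoundedness cancellation, obtains exactly your $\int (p(a,x)-\alpha)\,da$ over the region between $f'$ and the level-$\alpha$ crossing, partitioning $\mathcal{X}$ by whether $f'$ lies above or below that crossing rather than minimizing $H(t;x)$ pointwise via its derivative. If anything, your treatment of clause (2) is slightly more careful than the paper's, which asserts the conclusion from (C1)--(C2) without spelling out the intermediate-value argument (and, like your proof, implicitly uses continuity of $p(\cdot,x)$).
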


Theorem \ref{THMconsist-p-1} shows that for any given value of $X=x$, the length of $[f_{L,opt}(x),a_U]$ is the widest among all valid $\text{PDI}_{\alpha}(x)$. 

\subsection{Convergence Rate}\label{Convergence Rate}
Theorem \ref{THMbound-p-1} below shows that, for one-sided PDI lower bounds, the difference between the risk function $R(\cdot)$ and its non-convex relaxation $R_{\epsilon_n}(\cdot)$ converges to zero with the same rate as $\epsilon_n\rightarrow 0$. That is, the approximation bias goes to $0$. Then in Theorem \ref{THMoptimalrate-1}  we derive the convergence rates of the risk function based on the estimated $\hat{f}_{L,n}$ and the general results of empirical risk minimization and approximation. This theorem tells us how the empirically optimal policy performs out-of-sample. Here, we assume the Gaussian kernel with bandwidth $\gamma_n$ (i.e. $\mathbf{K}(x,x') = \exp(-\gamma_n^2||x - x'||^2))$ was used in estimating $\hat{f}_{L,n}$. Since ${f}_{L,opt}$ may not belong to the RKHS we optimize in, we choose the Gaussian kernel (which is dense in $L_2$) to make sure $\hat{f}_{L,n}$ can well approximate ${f}_{L,opt}$.

\begin{theorem}\label{THMbound-p-1}
	For any measurable function $f:\mathcal{X}\mapsto \mathbb{R}$, we have $|R(f)-R_{\epsilon_n}(f)|\leq C\epsilon_n$, where $C$ is a constant. 
\end{theorem}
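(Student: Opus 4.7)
The plan is to split $R(f)-R_{\epsilon_n}(f)$ into its $I(Y\le S)$ piece and its $I(Y>S)$ piece, and to bound each by integrating a pointwise ``transition strip'' inequality. Since $A\in[a_L,a_U]$ almost surely, $I(A\in[f_L(X),a_U])=I(A\ge f_L(X))$; comparing this with $\Psi_{\epsilon_n}(A,f_L(X))=\min\{\epsilon_n^{-1}(A-f_L(X))_+,1\}$, one checks directly from the piecewise definition that the two quantities agree outside the strip $[f_L(X),f_L(X)+\epsilon_n]$ and that inside the strip the indicator exceeds the surrogate by at most $1$. Thus
\[
 0 \;\le\; I\bigl(A\ge f_L(X)\bigr)-\Psi_{\epsilon_n}\bigl(A,f_L(X)\bigr) \;\le\; I\bigl(f_L(X)\le A\le f_L(X)+\epsilon_n\bigr),
\]
and by a symmetric check
\[
 0 \;\le\; I\bigl(A\le f_L(X)\bigr)-\Psi_{\epsilon_n}\bigl(f_L(X),A\bigr) \;\le\; I\bigl(f_L(X)-\epsilon_n\le A\le f_L(X)\bigr).
\]
In particular the difference $R(f)-R_{\epsilon_n}(f)$ is already non-negative pointwise, so the absolute value can be dropped.

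Next, I would exploit inverse-probability weighting to convert each expectation into a Lebesgue integral. Under the positivity assumption the conditional density $P(a\mid X)$ is strictly positive on $[a_L,a_U]$, and the tower property gives
\[
 E\!\left[\frac{h(A,X,Y)}{P(A\mid X)}\right] = E_X\!\left[\int_{a_L}^{a_U} E\bigl[h(a,X,Y)\,\big|\,A=a,X\bigr]\,da\right]
\]
for any non-negative measurable $h$. Applying this with $h(A,X,Y)=I(Y\le S)\{I(A\ge f_L(X))-\Psi_{\epsilon_n}(A,f_L(X))\}$ and using the pointwise bound above, the integrand is at most $P(Y\le S\mid A=a,X)\le 1$ on a set whose Lebesgue measure is at most $\epsilon_n$, so the expectation is bounded by $\epsilon_n$. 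An identical estimate handles the $I(Y>S)$ piece. Combining, $0 \le R(f)-R_{\epsilon_n}(f) \le \alpha\epsilon_n+(1-\alpha)\epsilon_n = \epsilon_n$, which establishes the claim with $C=1$.

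There is no serious obstacle in this argument; the only subtlety is that $f$ is an arbitrary measurable map into $\mathbb{R}$ and may take values outside $[a_L,a_U]$. If $f_L(X)$ falls outside this range the transition strip is simply truncated by $[a_L,a_U]$, which only \emph{decreases} the Lebesgue measure of the integration region, so the bound is unaffected. In particular, no continuity, boundedness, or smoothness of $f$ beyond measurability is required, and the result is a purely analytic consequence of the sandwich inequality above together with the IPW identity.
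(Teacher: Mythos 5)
Your proof is correct and follows essentially the same route as the paper's: both compare the indicator with the surrogate $\Psi_{\epsilon_n}$ pointwise, observe that they differ only on a transition strip of width $\epsilon_n$, and convert the IPW expectation into a Lebesgue integral over $a$ to bound the contribution of that strip by $\epsilon_n$. Your version adds two small refinements absent from the paper --- noting that $R(f)-R_{\epsilon_n}(f)\ge 0$ so the absolute value is superfluous, and explicitly handling $f$ taking values outside $[a_L,a_U]$ --- but the core argument is identical.
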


\begin{theorem}\label{THMoptimalrate-1} 
	Let $B^{\delta}_{1,\infty}(\mathbb{R}^d)$ be a Besov space defined as $B^{\delta}_{1,\infty}(\mathbb{R}^d)=\{f\in L_{\infty}(\mathbb{R}^d):\sup_{t>0} (t^{-\delta}w_{(r,L_1)}(f,t))<\infty\}$, where $w$ is the modulus of continuity.  Assume that $f_{L,opt}\in B^{\delta}_{1,\infty}(\mathbb{R}^d)$. Then for any $\zeta>0$, $d/(d+\tau)<p<1$, and $\tau>0$, 
	\begin{eqnarray*}
		&	 R(\hat f_{L,n})-R(f_{L,opt})\leq c_1\frac{\lambda_n}{\gamma_n^d}+c_2\gamma_n^\delta  +  \\ & c_3{\gamma_n^{-\frac{(1-p)(1+\zeta)d}{2-p}}\lambda_n^{-\frac{p}{2-p}}n^{-\frac{1}{2-p}} }
		+c_4\frac{\tau^{\frac{1}{2}}}{n^{\frac{1}{2}}}+c_5\frac{\tau}{n}+c_6\epsilon_n
	\end{eqnarray*}
	with probability $1-3e^{-\tau}$. Furthermore, if we choose these constants to be
	\begin{align*}
		\gamma_n =\mathcal{O}\bigg(\frac{1}{n}\bigg)^{\frac{1}{2\delta+d}}\hspace{0.01\textheight}\lambda_n =\mathcal{O} \bigg(\frac{1}{n}\bigg)^{\frac{\delta+d}{2\delta+d}}\hspace{0.01\textheight}\epsilon_n=\mathcal{O}\bigg(n^{-\frac{\delta}{2\delta+d}}\bigg).
	\end{align*}
	We have the following optimal convergence rate:
	$
	R(\hat f_{L,n})-R( f_{L,opt})=\mathcal{O}\bigg(n^{-\frac{\delta}{2\delta+d}}\bigg).
	$
\end{theorem}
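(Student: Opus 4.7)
The plan is to decompose the excess risk $R(\hat f_{L,n})-R(f_{L,opt})$ into an approximation error, a stochastic estimation error, and a surrogate bias term, then optimize the tuning parameters. Specifically, let $\mathcal{H}_{\gamma_n}$ denote the Gaussian RKHS with bandwidth $\gamma_n$, let $f^*_{\gamma_n}\in\mathcal{H}_{\gamma_n}$ be a clever deterministic approximator to $f_{L,opt}$ (e.g.\ obtained by convolving $f_{L,opt}$ with a rescaled Gaussian), and write
\begin{align*}
R(\hat f_{L,n})-R(f_{L,opt}) &\leq \bigl[R(\hat f_{L,n})-R_{\epsilon_n}(\hat f_{L,n})\bigr] \\
&\quad + \bigl[R_{\epsilon_n}(\hat f_{L,n})-R_{\epsilon_n}(f^*_{\gamma_n})\bigr] \\
&\quad + \bigl[R_{\epsilon_n}(f^*_{\gamma_n})-R(f_{L,opt})\bigr].
\end{align*}
The first and third brackets are handled by Theorem~\ref{THMbound-p-1}, which together with the triangle inequality contributes the $c_6\epsilon_n$ term.

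Next I would control the deterministic approximation bracket. The Besov assumption $f_{L,opt}\in B^{\delta}_{1,\infty}(\mathbb{R}^d)$ is exactly the standard hypothesis used by Steinwart--Scovel and Eberts--Steinwart for rates with Gaussian kernels: choosing $f^*_{\gamma_n}$ as a suitable Gaussian convolution of $f_{L,opt}$ gives $\|f^*_{\gamma_n}-f_{L,opt}\|_\infty \lesssim \gamma_n^\delta$, while its RKHS norm satisfies $\|f^*_{\gamma_n}\|^2_{\mathcal{H}_{\gamma_n}}\lesssim \gamma_n^{-d}$. Combining Lipschitzness of $\Psi_\epsilon$ in its arguments (the loss is $1/\epsilon_n$-Lipschitz, but on its truncated range it is also bounded by $1$, so after the $\epsilon$--$R$ comparison of Theorem~\ref{THMbound-p-1} the dependence on $\epsilon_n$ is absorbed) with the bound on $\|f^*_{\gamma_n}\|_\infty$, I obtain the approximation terms $c_1\lambda_n/\gamma_n^d + c_2\gamma_n^\delta$ after including the regularization penalty $\tfrac{\lambda_n}{2}\|f^*_{\gamma_n}\|_{\mathcal{H}}^2$ incurred by the optimality of $\hat f_{L,n}$.

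The stochastic bracket is where the main work (and main obstacle) lies. Because $\hat f_{L,n}$ minimizes the regularized empirical risk, I have $\|\hat f_{L,n}\|_{\mathcal{H}}^2 \lesssim \lambda_n^{-1}\gamma_n^{-d}$, which bounds the effective hypothesis class to a RKHS ball of known radius. The covering number estimate of Steinwart--Scovel for Gaussian RKHS gives, for every $p\in(d/(d+\tau),1)$ and every $\zeta>0$, an entropy bound with exponent $p$ and an extra factor $\gamma_n^{-(1-p)(1+\zeta)d/2}$; the assumption $d/(d+\tau)<p<1$ is precisely what makes Dudley's integral finite under the effective dimension $\tau$. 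Feeding this into a localized peeling argument combined with Talagrand's (or Bartlett's) Bernstein-type concentration inequality for empirical processes, together with the boundedness of the inverse propensity weights, yields the variance term $c_3\gamma_n^{-(1-p)(1+\zeta)d/(2-p)}\lambda_n^{-p/(2-p)}n^{-1/(2-p)}$ and the residual Bennett-type deviations $c_4\tau^{1/2}/n^{1/2}+c_5\tau/n$, each holding with probability at least $1-e^{-\tau}$; a union bound over the three independent events gives the stated $1-3e^{-\tau}$.

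Finally, with the oracle bound in hand, I would balance the three leading terms $\lambda_n/\gamma_n^d$, $\gamma_n^\delta$, and $\gamma_n^{-(1-p)(1+\zeta)d/(2-p)}\lambda_n^{-p/(2-p)}n^{-1/(2-p)}$ against one another by elementary calculus. Choosing $\gamma_n\asymp n^{-1/(2\delta+d)}$ forces the RKHS-approximation term $\gamma_n^\delta$ and the penalty-approximation term $\lambda_n/\gamma_n^d$ to match at order $n^{-\delta/(2\delta+d)}$ when $\lambda_n\asymp n^{-(\delta+d)/(2\delta+d)}$; taking $p\uparrow 1$ and $\zeta\downarrow 0$ makes the stochastic term no worse than the same rate, and setting $\epsilon_n\asymp n^{-\delta/(2\delta+d)}$ matches the surrogate bias. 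The main obstacle I expect is verifying the covering/bracketing estimate for the Gaussian RKHS uniformly in the bandwidth $\gamma_n$ while carrying the $1/P(A\mid X)$ weights through the Bernstein variance computation; once that lemma is in place, the rate optimization is mechanical.
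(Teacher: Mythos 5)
Your proposal follows essentially the same route as the paper: the paper also splits the excess risk via Theorem~\ref{THMbound-p-1} into a surrogate-bias term of order $\epsilon_n$ plus a regularized excess $\epsilon$-risk, bounds the latter with the oracle inequality of Theorem 7.23 of Steinwart and Christmann (whose proof is exactly the localized entropy/Talagrand argument you sketch, after verifying boundedness, clipping, the variance bound with $v=0$, and the Gaussian-kernel entropy estimate with exponent $p$ and factor $\gamma_n^{-(1-p)(1+\zeta)d/2}$), and controls the approximation error by the Eberts--Steinwart Gaussian-convolution construction with $\|\mathcal{K}*f_{L}^{*}\|_{\mathcal{H}_{\gamma_n}}^{2}\lesssim\gamma_n^{-d}$ and the Besov modulus giving $\gamma_n^{\delta}$, followed by the same balancing of $\gamma_n$, $\lambda_n$, $\epsilon_n$. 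The one correction worth making is that $f_{L,opt}\in B^{\delta}_{1,\infty}$ only controls the $L_1$ modulus of continuity, so the convolution approximates $f_{L,opt}$ at rate $\gamma_n^{\delta}$ in $L_1(P_X)$ rather than in sup norm; since the surrogate loss is Lipschitz in $f$, this $L_1$ bound is exactly what the paper feeds into the excess-risk estimate, and your conclusion is unaffected.
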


Theorem \ref{THMoptimalrate-1} shows that the O-learning for PDI can potentially achieve a rate of nearly $\mathcal{O}(n^{-\frac{1}{2}})$, when the smoothness parameter $\delta$ of the optimal bound function $f_{L,opt}$ is relatively large compared to the covariate dimension $d$. This rate is faster than the convergence rate of about $\mathcal{O}(n^{-\frac{1}{4}})$ in \cite{chen2016personalized}, which focused on finding the optimal dose. This improvement is due to the fact that more sample points are contained in the neighborhood of the entire optimal interval than in the neighborhood of the optimal dose. Note that, we have assumed the generalized propensity score $P(a|x)$ can be  estimated at the rate of $\mathcal{O}(n^{-1/2})$ or faster. Consequently, the estimation error of the generalized propensity score is negligible. However, if the generalized propensity score can not be estimated at this rate, then convergence rate in Theorem \ref{THMoptimalrate-1} will be dominated by the possibly slower convergence rate of the generalized propensity score estimator.

\section{EXPERIMENTS}\label{exp}
\subsection{Simulation}\label{Simulation}
We compare our methods with three indirect methods for learning one-sided $\text{PDI}_{\alpha}(X)$. Note that existing optimal dose learning methods can not yield a dose interval, hence they are not included in the comparison. For our methods, we implemented linear and Gaussian kernels and denote them as L-O-Learning and G-O-Learning. As the DC-algorithm is an iterative algorithm which relies on initial values, we used covariates-independent lower bounds as such initial values. All indirect methods involve classification of outcomes to find the PDIs. In particular, the indicator $I{\{Y>S\}}$ was used as the binary outcome for classification so that we could build a model for $\hat P(Y>S|a,X) \equiv h(A,X)$. Then we found the PDIs as a range of doses $\mathcal{A}_1(X)$ such that for $\forall a\in \mathcal{A}_1(X)$,  $h(a,X)>\alpha$. To find $\mathcal{A}_1(X)$, we  evaluated different $a$ value in estimated $h(a,X)$. All $a$ that satisfied $h(a,X)>\alpha$ were included in $\mathcal{A}_1(X)$. In our simulations, we used $200$ evenly spaced grids to construct $\mathcal{A}_1(X)$. 

We used logistic regression with $L_1$ penalty, Support Vector Machine (SVM), and Random Forests (RF) as indirect methods. For the penalized logistic regression, we included linear and quadratic terms of the dose, covariates, and pairwise interactions between dose and covariates as predictors. For SVM, we used the Gaussian kernel with the kernel bandwidth fixed using the ``median heuristic". For RF, we set the number of trees to be $1000$. All methods required parameter tuning and we tuned the $L_2$ regularization parameters for O-learnings, the $L_1$ regularization parameter for the logistic regression, the margin trade-off parameter for the SVM, and the percentage of variables that can be used for splitting at each node for the RF. Tuning parameters for all methods were selected using 5-fold cross-validation. 

We considered multiple simulation settings with various data generating processes, sample sizes, numbers of covariates as well as signal to noise levels. We set the confidence probability $\alpha$ to be $0.5$. Scenario $1$ is for linear lower bounds and Scenario $2$ for nonlinear lower bounds. For both scenarios, there is a unique value of $a$ such that $P(Y(a) > S(x)|x) = \alpha$ for a given $x$. 

We assume that the training data contain $n$ i.i.d. observations $(A_i,Y_i,X_i), i = 1 \ldots n$. The dose $A_i$ is generated from a truncated normal distribution $\mathcal{N}(\mu_A,0.5)$ with the truncation limits $[-2,2]$. $X_i$ is generated from a uniform distribution $\text{Uniform}(-1,1)^{d}$. The outcome $Y$ is generated from a normal distribution $\mathcal{N}(\mu_Y,\sigma^2)$. We assume there exist confounders (i.e., $X$ can impact both treatment assignment and the outcome) to mimic the observational study settings. In the supplement, we also include simulations for randomized trial settings. 

\noindent{\textbf{Scenario 1: Linear bounds} } \\
$\mu_A=0.3X_{1}+0.3X_{2}+0.3X_{3};$  \\
$\mu_Y= {5e^{10A}}/(e^{10\mu_A}+e^{10A})+D(X);$ \\
$D(X)=0.6X_{2}+0.6X_{3}+0.6X_{4}.$

\noindent{\textbf{Scenario 2: Non-linear bounds}} \\
$\mu_A=0.75\log(|X_{1}|+1)-0.2\cos(\pi X_{2}) + 0.2I(X_{3}>0) - 0.4;$  \\
$\mu_Y={5e^{10A}}/(e^{10C}+e^{10A})+D(X);$ \\
$D(X)=0.4 \sin(\pi X_{2}) + 0.4 I(X_{3}>0)+0.4|X_{4}|.$ 

We repeat each setting for $100$ times and report the results of all methods by evaluating the empirical risk or the empirical value of $R(\hat{f}_L)$ defined in Equation \eqref{one-sided PDI} on corresponding independent testing data sets with sample sizes of $10000$. For O-learning methods, we estimate weights that could approximate $1/P(a|x)$ using the approaches mentioned in the Section \ref{section:GPS}. Meanwhile when evaluating the performances of all methods, we plug in the true values of $1/P(a|x)$ for calculating the empirical risks of estimated policies on the testing data sets.

\begin{table*}[ht]
	\centering
	\caption{Empirical risk ($R(\hat{f}_L)$) with SD in (.) from comparison methods under different settings} \label{simresults}
	\begin{tabular}{clllllll}
		\hline
		& n  & d & L-O-Learning & G-O-Learning& Logistic & SVM & RF \\ 
		\hline
		Scenario 1& 200 & 10  & \textbf{0.048 (0.004)}  & 0.049 (0.004) & 0.054 (0.023) & 0.053 (0.005) & 0.053 (0.005) \\ 
		$\sigma^2 = 2.25$		   & 200 & 50  &  0.055 (0.005) & \textbf{0.052(0.002) }& 0.176(0.074) & 0.077 (0.041) & 0.058 (0.004) \\ 
		& 400 & 10  & 0.045 (0.002)  & 0.047 (0.002) & \textbf{0.044(0.002) }& 0.048 (0.002) & 0.049 (0.002) \\ 
		& 400 & 50  &  0.056 (0.004) & \textbf{0.051 (0.002) }& 0.071(0.025) & 0.055 (0.003) & 0.053 (0.002) \\ 
		\hline
		Scenario 1 & 200 & 10  & 0.122 (0.004)  & \textbf{0.121 (0.003) } & 0.182(0.049) & 0.139 (0.026) & 0.134 (0.010) \\ 
		$\sigma^2 = 9$				& 200 & 50  &  0.132 (0.005) & \textbf{0.122(0.002) }& 0.215(0.049) & 0.238 (0.032) & 0.142 (0.014) \\ 
		& 400 & 10  & \textbf{0.120 (0.003)}  & 0.121 (0.003) & 0.142(0.037) & 0.124 (0.007) & 0.128 (0.005) \\ 
		& 400 & 50  &  0.129 (0.003) & \textbf{0.122 (0.002) }& 0.205 (0.049) & 0.139 (0.023) & 0.127 (0.004) \\ 
		\hline
		Scenario 2 & 200 & 10 & 0.054 (0.003) & \textbf{0.050 (0.002)} & 0.069 (0.044) & 0.055 (0.005) & 0.052 (0.005) \\ 
		$\sigma^2 = 2.25$				& 200 & 50& 0.061 (0.005) & \textbf{0.051 (0.002}) & 0.187 (0.078) & 0.177 (0.081) & 0.057 (0.006) \\ 
		& 400 & 10  & 0.051 (0.002)  & \textbf{0.049 (0.002) }& 0.051 (0.010) & 0.051 (0.004) & 0.049 (0.005) \\ 
		& 400 & 50  &  0.061 (0.003) & \textbf{0.051 (0.002)} & 0.095(0.050) & 0.060 (0.003) & 0.051 (0.003) \\ 
		\hline
		Scenario 2  & 200 & 10 & 0.123 (0.003) & \textbf{0.120 (0.003}) & 0.177 (0.052) & 0.144 (0.035) & 0.131 (0.011) \\ 
		$\sigma^2 = 9$   	& 200 & 50 & 0.133 (0.005) & \textbf{0.121 (0.002}) & 0.199 (0.053) & 0.249 (0.010) & 0.145 (0.016) \\ 
		& 400 & 10  & 0.121 (0.003)  & \textbf{0.120 (0.002) }& 0.143 (0.038) & 0.123 (0.005) & 0.127 (0.009) \\ 
		& 400 & 50  &  0.131 (0.004) & \textbf{0.121 (0.002)} & 0.198 (0.053) & 0.153 (0.040) & 0.125 (0.005) \\ 
		\hline
	\end{tabular}
\end{table*}

For all methods we compared, we set $S$ to be the fitted value of a polynomial regression of Y with X as predictors (i.e., set $S(x)=\hat E\left[Y|x\right]$), as there are strong prognostic effects. From Table~\ref{simresults}, the L-O-Learning has advantages in the linear bound settings and the G-O-Learning has advantages in the nonlinear settings. In particular, compared to other methods, our methods are more robust to high covariate dimensionality (large $d$), low sample size (small $n$), and low signal-to-noise ratio (large $\sigma^2$) settings. In terms of computational time, the RF and L-O-Learning/G-O-Learning are comparable, and they are much faster than the SVM.

\subsection{Study of HbA1c Control for Diabetic Patients}\label{RealData}
A fundamental aspect of managing diabetes is achieving tight control of blood sugar levels to reduce the risk of diabetes complications. Current diabetes guidelines recognize that treatment recommendations regarding tight control of HbA1c may not be appropriate for older patients, especially those with comorbid conditions, due to increased risk for adverse events \citep{american2018standards}. For example, complex patients may be at risk for negative outcomes when treated aggressively \citep{ismail2011individualizing, riddle2012individualizing}. These increased risks have even more significance if the benefits of tight control are limited due to shortened life expectancy. If the long-term benefits are not relevant and the short-term risks are significant, there may be little reason to pursue tight control of HbA1c. As a proof-of-concept analysis, we would like to recommend one-sided PDI for the upper bound of HbA1c such that the number hospitalization events are minimized. 

We use a de-identified electronic health record (EHR) dataset from a large health care system which contains laboratory records, information on primary care visits, and medication history of diabetes patients between 01/2003 to 12/2011. We use $45$ relevant features and $8126$ subjects whose observed HbA1c measures that are between $4\%$ and $10\%$ to learn the PDI of HbA1c. Details about the data set can be found in the supplement. For the purpose of notation consistency, we take the negative value of number of hospitalization events as the outcome. Hence a larger value of $Y$ represents a preferred outcome. By this coding, we also set $S=-0.5$ and $\alpha = 0.5$, which leads to the interpretation of PDI such that if a patient controls his/her HbA1c less than the upper bound, they would likely not experience diabetes-related hospitalization (i.e., with the probability of hospitalization less than $0.5$) in the next 90-days after HbA1c is measured. Since the observed outcome is a count, any choice of $S$ that is between $-1$ and $0$ will lead to the same result and interpretation as setting $S=-0.5$. 

We randomly select $1000$ patients as a training data set to learn the policy and use the remaining $7126$ patients as a testing data set. We repeat this random splitting $100$ times and report the averaged results. We apply all methods compared in our simulations to this real data. The tuning parameters are selected by using the same 5-fold cross-validation strategy.

To demonstrate the result, we plotted the distributions of the observed HbA1c level  and the estimated upper bounds of samples in one testing data set  in Figure ~\ref{Distribution}. While the majority of patients have observed HbA1c between 5.5\% and 7.5\%, their estimated upper bounds of the L-O-Learning and G-O-Learning are generally between 6.5\% and 8\%, which suggests that some patients may be over-controlling their HbA1c levels. In comparison, the PDI recommendations from the comparison methods are highly tilted towards the upper limit, especially the SVM. It is unlikely that physicians will recommend such high HbA1c values to patients. Furthermore, the grid search range of HbA1c levels for the indirect methods is between 4\% and 10\%, and indirect methods recommended a high proportion of patients to have 10\% as the HbA1c levels upper bound (logistic regression: 30\% of all patients; SVM: 70\%; RF: 50\%). If we change the grid search range to be between 4\% and 9\%,  the indirect methods would recommend a high proportion of patients to have 9\% as the upper bound. This indicates the policies estimated from indirect methods can have unwanted dependence on the grid search range. Such patterns of the estimated upper bounds of HbA1c levels shown in Figure ~\ref{Distribution} are similar across all $100$ replications.

\begin{figure*}[ht]
	\centering
	\includegraphics[width=12cm, height=7cm]{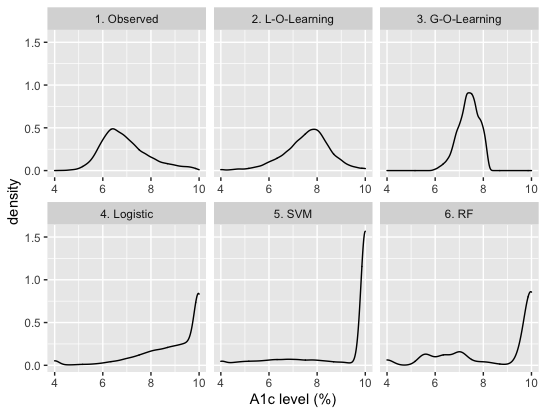}
	\caption{Density plot of observed and estimated upper bounds of HbA1c PDIs.} \label{Distribution}  
\end{figure*}

Furthermore, the upper bounds learned from our methods (especially the G-O-Learning) have much better agreement with the current medical guidance compared to other methods. In particular, the 2018 American Diabetes Association (ADA) Standards of Medical Care in Diabetes advises the following upper HbA1c bounds for diabetes patients \citep{american2018standards}: 6.5\% for people who can achieve this goal without experiencing a lot of hypoglycemia episodes or other negative effects of having lower blood glucose levels; 7.5\% for all children with diabetes; 8\% for people with a history of severe hypoglycemia, people who have had diabetes for many years or have difficulty achieving tighter control, and people with limited remaining life expectancy. In our dataset, 88\% of the patients are over 60 years old, 50\% over 73 years old, and 20\% over 80 years old. Most of these patients have diabetes for many years. Hence, the distributions of upper bounds recommendations from our methods coincide well with the standards given the age distribution. 

\section{CONCLUSION}\label{Conclusion}
We have proposed a framework for learning individualized dose intervals with desired theoretical properties. Our method extends a previous method that can only learn the policy for an optimal dosage \citep{chen2016personalized}. We proposed a DC algorithm to efficiently solve the underlying non-convex optimization problems with linear or nonlinear policies learning. 

Note that both direct and indirect methods proposed by us can learn PDI, meanwhile, we advocate using the direct method since it can recommend dose intervals without the need of fitting complex regression models. Learning PDI relies on the function restriction due to Conditions C1 and C2  in Section~\ref{Theory}, and such a restriction applies to both direct and indirect methods. If the conditions are not satisfied, the PDI may not be well defined or it will be a union of several discontinuous intervals. 

We also believe that $S$ should be chosen to have good interpretation. In our HbA1c example, choosing $S$ between $-1$ to $0$  leads to the same interpretation of the resulting PDI. For the applications where there is no obvious choice of $S$, we could set it to be a number between $\alpha \times 100\%$-th percentile of $Y(a_L)|X$ and  $\alpha \times 100\%$-th percentile of $Y(a_U)|X$ to ensure the existence of PDI.

Lastly, learning causally meaningful PDIs using observed data relies on the no-unmeasured confounder assumption. In our HbA1c example, we worked closely with our physician collaborator to include curated variables from EHR and medical claims to ensure this assumption. At the same time, sensitivity analysis may help examine the robustness of the PDI policy. We want to emphasize that before implementing an estimated policy for PDI in practice, it is critically important to verify the superiority of the estimated policy over existing clinical practice via randomized trials. 

\subsection*{Acknowledgments}
We would like to thank all reviewers for their helpful comments that greatly improved this paper.
Research reported in this work was partially funded through a Patient-Centered Outcomes Research
Institute (PCORI) Award (ME-2018C2-13180). The views in this work are solely the responsibility
of the authors and do not necessarily represent the views of the Patient-Centered
Outcomes Research Institute (PCORI), its Board of Governors or Methodology Committee.

\bibliography{doseinterval}
\bibliographystyle{apalike}


\clearpage
\appendix

\thispagestyle{empty}

\onecolumn \makesupplementtitle

\section{DC  ALGORITHM FOR ONE-SIDED PDI}
Let $f_L(X_i; \mathbf{v}) = \sum_{j=1}^{n}v_j k(X_i, X_j) + v_0$ and $\mathbf{v} = (v_1,.....,v_n, v_0)$.
As claimed in Section 3.2 in the main paper,  the DC algorithm  repeatedly updates 
\begin{equation}
	\begin{split}
		\mathbf{v}^{t+1}=\arg\min_\mathbf{v} \left(\mathcal{L}_1(\mathbf{v})-[\nabla \mathcal{L}_2(\mathbf{v}^t)]^\top(\mathbf{v}-\mathbf{v}^t)\right).
	\end{split}
\end{equation}
We derive the form of the quadratic programming by first showing that 
\begin{equation}
	\begin{split}
		\nabla \mathcal{L}_2(\mathbf{v})&={\nabla}_\mathbf{v}\Bigg(\sum_{i=1}^n \frac{1}{P(A_i|X_i)} \Big[\big(1-\alpha\big)I(Y_i>S)\Psi_{\epsilon, 2}\big(f_L(X_i), A_i\big) + \alpha I(Y_i \leq S) \Psi_{2,\epsilon}\big(A_i, f_L(X_i) \big)\Big]\Bigg)\nonumber\\
		&=\sum_{i=1}^n\frac{1}{P(A_i|X_i)} \Big[ \big(1-\alpha\big) I(Y_i>S) Q_i(\beta)
		- \alpha I(Y_i \leq S) \tilde Q_i(\beta)\Big] \nabla_\mathbf{v}f_L(X_i) \nonumber
	\end{split}
\end{equation}
where $\nabla_\mathbf{v}f_L(X_i) = (k(X_i, X_1), \ldots, k(X_i, X_n), 1)$, $Q_i(\mathbf{v})=I\big(f_L(X_i; \mathbf{v}) - A_i >  \epsilon \big)/\epsilon$ and  $\tilde Q_i(\mathbf{v})=I\big(A_i- f_L(X_i; \mathbf{v})>\epsilon\big)/\epsilon$  \\

Following that, we have 
\begin{align}
	[\nabla \mathcal{L}_2(\mathbf{v}^{t})]^\top \mathbf{v} &=
	\sum_{i=1}^n\frac{1}{P(A_i|X_i)}\Big[ \big(1-\alpha\big)I(Y_i > S)Q_i(\mathbf{v}^{t})
	- \alpha I(Y_i \leq S)\tilde Q_i(\mathbf{v}^{t}) \Big] \big\{\sum_{j=1}^{n}v_jk(X_i,X_j) + v_0 \big\} \nonumber 
\end{align}

By plugging in the above results into the DC procedure, 
\begin{align}
	\mathbf{v}^{t+1}&=\arg\min_\mathbf{v} \hskip0.02\textwidth \mathcal{L}_1(\mathbf{v})-[\nabla \mathcal{L}_2(\mathbf{v}^t)]^\top(\mathbf{v} -\mathbf{v}^t) = \arg\min_\beta\hskip0.02\textwidth \mathcal{L}_1(\mathbf{v})-[\nabla \mathcal{L}_2(\mathbf{v}^t)]^\top \mathbf{v} \nonumber \\
	&=  \arg\min_\mathbf{v}  \hskip0.02\textwidth \sum_{i=1}^n\frac{1}{P(A_i|X_i)} \Bigg(\big(1-\alpha\big)I(Y_i > S)\Big[\Psi_{1,\epsilon}\big(f_L(X_i; \mathbf{v}) ,A_i\big)- Q_i(\mathbf{v}^t) (\sum_{j=1}^{n}v_jk(X_i,X_j) + v_0) \Big]\nonumber \\
	& \hskip0.02\textwidth + \alpha I(Y_i \leq S)\Big[\Psi_{1,\epsilon}\big(A_i, f_L(X_i; \mathbf{v}) \big) +  \tilde Q_i(\mathbf{v}^t) (\sum_{j=1}^{n}v_jk(X_i,X_j) + v_0) \Big] \Bigg)  +\frac{\lambda_n}{2} \sum_{i=1}^{n}\sum_{j=1}^{n}v_i v_j k(X_i, X_j)   \label{fulltarget-app}
\end{align}

To solve the above optimization problem, we now show that it is essentially a quadratic programming problem which can be easily solved using standard optimization packages. 

Let $H_i=\frac{(1-\alpha)I(Y_i>S)}{P(A_i|X_i)\lambda_n}$ and $\tilde H_i=\frac{\alpha (Y_i \leq S)}{P(A_i|X_i)\lambda_n}$. Minimizing Equation \eqref{fulltarget-app} is equivalent to 
\begin{align*}
	\arg\min_{\mathbf{v}} & \sum_{i=1}^{n}H_i \xi_i+ \sum_{i=1}^{n} \tilde H_i \tilde \xi_i +  \sum_{i=1}^{n}(\tilde H_i\tilde Q_i^t - H_i Q_i^t)(\sum_{j=1}^{n}v_jk(X_i,X_j) + v_0) +\frac{1}{2} \sum_{i=1}^{n}\sum_{j=1}^{n}v_i v_j k(X_i, X_j)  \\
	s.t.\;\;\; & \xi_i - [ (\sum_{j=1}^{n}v_jk(X_i,X_j) + v_0) - A_i]/\epsilon \geq 0,\hskip0.1\textwidth \xi_i\geq 0\\
	&\tilde \xi_i  - [A_i - ( \sum_{j=1}^{n}v_j k(X_i, X_j) + v_0)]/\epsilon \geq 0,\hskip0.1\textwidth \tilde \xi_i\geq 0
\end{align*}

Applying the Lagrangian multiplier, the problem above is equivalent to 
\begin{align}
	& \max_{\alpha_i,\tilde \alpha_i,\mu_i,\tilde \mu_i} \min_{\mathbf{v},\xi}\sum_{i=1}^{n}H_i \xi_i+ \sum_{i=1}^{n} \tilde H_i \tilde \xi_i + \sum_{i=1}^{n}(\tilde H_i\tilde Q_i^t - H_i Q_i^t)(\sum_{j=1}^{n}v_jk(X_i,X_j) +v_0) +\frac{1}{2}\sum_{i=1}^{n}\sum_{j=1}^{n}v_i v_j k(X_i, X_j) \nonumber\\
	&-\sum_{i=1}^{n} \alpha_i( \xi_i - [(\sum_{j=1}^{n}v_jk(X_i,X_j)  + v_0) - A_i]/\epsilon)-\sum_{i=1}^{n} \tilde \alpha_i(\tilde \xi_i  - [A_i - (\sum_{j=1}^{n}v_jk(X_i,X_j)  + v_0)]/\epsilon )-\sum_{i=1}^{n} \mu_i\xi_i-\sum_{i=1}^{n}\tilde \mu_i\tilde \xi_i\nonumber\\
	s.t.& \hskip0.05\textwidth \sum_{i=1}^{n}v_i k(X_i, X_j) +\sum_{i=1}^{n} (\tilde H_i\tilde Q_i^t - H_i Q_i^t +\alpha_i/\epsilon - \tilde \alpha_i/\epsilon) k(X_i,X_j)  =0 \;\;\; \forall j=1,\dots,n \nonumber\\
	&\hskip0.05\textwidth \sum_{i=1}^{n} (\tilde H_i\tilde Q_i^t - H_i Q_i^t +\alpha_i/\epsilon - \tilde \alpha_i/\epsilon) =0\nonumber\\
	&\hskip0.05\textwidth H_i - \alpha_i-\mu_i=0\;\;\; \forall i=1,\dots,n\nonumber\\
	&\hskip0.05\textwidth \tilde H_i - \tilde \alpha_i-\tilde \mu_i=0\;\;\; \forall i=1,\dots,n\label{dual-single}
\end{align}

Incorporating the constraint that $\alpha_i\geq0,\;\tilde \alpha_i\geq0,\;\mu_i\geq0,\;\tilde \mu_i\geq0$ and denoting $\beta_i = \tilde H_i\tilde Q_i^t - H_i Q_i^t +\alpha_i /\epsilon - \tilde \alpha_i /\epsilon$, we have the final quadratic programming problem as 
\begin{align}
	\min_{\beta} \hskip0.05\textwidth& \frac{1}{2}\sum_{i=1}^{n}\sum_{j=1}^{n}  \beta_i k(X_i,X_j) \beta_j + \sum_{i=1}^{n} \beta_i A_i \nonumber\\
	s.t.\hskip0.05\textwidth &  \tilde H_i\tilde Q_i^t - H_i Q_i^t - \tilde H_i/\epsilon \leq \beta_i\leq \tilde H_i\tilde Q_i^t - H_i Q_i^t + H_i/\epsilon \;\;\; \forall i=1,\dots,n \nonumber\\ 
	& \sum_{i=1}^{n} \beta_i =0  \label{quadratic-single-app}
\end{align}

The $t+1$ step solution of $(v_1, \ldots, v_n)$ is thus attained from 
\begin{align}
	\hskip0.05\textwidth \sum_{i=1}^{n}v_i k(X_i, X)  = -\sum_{i=1}^{n} \beta_i k(X_i,X)   \nonumber 
\end{align}

As for the intercept $v_0$, although there exists an analytic solution by applying the KKT conditions, we solve it using a line search for simplicity. We iteratively solve the quadratic programming problem until convergence.

\section{PROOF OF THEOREMS FOR ONE-SIDED PDI}
\subsection{Proof of Theorem \ref{THMconsist-p-1}}\label{thmconsist-p-1}
\begin{proof}
	First, let $f^*\in F\equiv\{f:\forall x\in \mathcal{X}, P(Y(f(x)) >S)=\alpha\}$. For any $f'$ such that $f'\neq f^*$, denote $\mathcal{X}_0=\{x:x\in\mathcal{X},\exists f\in F\; s.t.\;f'(x)=f(x)\}$, $\mathcal{X}_1=\{x:x\in\mathcal{X},\;f'(x)>\max_{f\in F}f(x)\}$, and $\mathcal{X}_2=\{x:x\in\mathcal{X},\;f'(x)<\min_{f\in F}f(x)\}$\\
	
	Under the ignorability assumption, $P(Y>S |A=f'(x),x)=P(Y(f'(x))>S|x)>\alpha$ for $x\in\mathcal{X}_1$, while $P(Y>S|A=f'(x),x)=P(Y(f'(x))>S|x)<\alpha$ for $x\in\mathcal{X}_2$. 
	
	Then
	\begin{align*}
		&\hspace{-1cm} R(f') - R(f^*) \\
		=&E\Big[\frac{1}{P(A\,|\,X)}\bigg(\big(1-\alpha\big)I\big(Y>S\big)\Big(I\big(A<f'(X)\big)-I\big(A<f^*(X)\big)\Big)\\
		&+\alpha I\big(Y \leq S\big)\Big(I\big(A>f'(X)\big)-I\big(A>f^*(X)\big)\Big)\bigg)\Big] \\
		=&E\Bigg[\int_{f^*(X)}^{f'(X)}\Big(\big(1-\alpha\big)P\big(Y>S|A=a,X\big)\\
		&\hspace{0.15\textheight}-\alpha P\big(Y \leq S|A=a,X\big)\Big)da\bigg|X\in \mathcal{X}_1\Bigg]P(X\in \mathcal{X}_1) \tag{A1}\\
		&+E\Bigg[\int_{f'(X)}^{f^*(X)}\Big(-\big(1-\alpha\big)P\big(Y>S|A=a,X\big)\\
		&\hspace{0.15\textheight}+\alpha P\big(Y \leq S|A=a,X\big)\Big)da\bigg|X\in \mathcal{X}_2\Bigg]P\big(X\in \mathcal{X}_2\big)\tag{A2}\\
		&+E\Bigg[\int_{f^*(X)}^{f'(X)}\Big(\big(1-\alpha\big)\alpha-\alpha\big (1-\alpha\big)\Big)da\bigg|X\in \mathcal{X}_0\Bigg]P\big(X\in \mathcal{X}_0\big)\tag{A3}\\
	\end{align*}
	
	Here 
	\[
	\begin{split}
		(A3)=&0\\
		(A1)=&E\left[\int_{f^*(X)}^{f'(X)}\Big(P\big(Y>S|A=a,X\big)-\alpha\Big)da\bigg|X\in \mathcal{X}_1\right]P\big(X\in \mathcal{X}_1\big)\nonumber \\
		=&E\left[\int_{\max_{f\in\mathcal{F}} f(X)}^{f'(X)}\Big(P\big(Y>S|A=a,X\big)-\alpha\Big)da\bigg|X\in \mathcal{X}_1\right]P\big(X\in \mathcal{X}_1\big)\nonumber \\
		\geq& 0\\
		>&0 \qquad \text{  if  }\quad P(X\in \mathcal{X}_1)\neq 0\nonumber \\
	\end{split}
	\]
	
	\[
	\begin{split}
		(A2)=&E\left[\int_{f'(X)}^{f^*(X)}\Big(-P\big(Y>S|A=a,X\big)+\alpha\Big)da\bigg|X\in \mathcal{X}_2\right]P(X\in \mathcal{X}_2)\nonumber \\
		=&E\left[\int_{f'(X)}^{\min_{f\in\mathcal{F}}f(X))}\Big(-P\big(Y>S|A=a,X\big)+\alpha\Big)da\bigg|X\in \mathcal{X}_2\right]P(X\in \mathcal{X}_2)\nonumber \\
		\geq& 0\\
		>&0 \qquad \text{  if  }\quad P(X\in \mathcal{X}_2)\neq 0\nonumber \\
	\end{split}
	\]
	
	Therefore, $R(f')-R(f^*)\geq 0$. And $R(f')-R(f^*)=0$ if and only if $P(X\in\mathcal{X}_1)=P(X\in\mathcal{X}_2)=0$, i.e, $f'\in F$. Hence any possible optimal left bound function $f_{L,opt}(x)= \arg\min_f R(f)$ belongs to $F$. In addition, according to the assumptions (C1) and (C2) of  partial monotonicity for PDI, we have that $P\left(Y(f_{L,opt}(x) )>S\right)=\alpha$ and $P\left(Y(a'(x))>S\right)\geq \alpha$ for $\forall x\in\mathcal{X}$, where  $a'(x)$ is an arbitrary measurable function s.t. $f_{L,opt}(x)\leq a'(x)\leq a_U$. Theorem \ref{THMconsist-p-1} is proved.
	
\end{proof}

\subsection{Proof of Theorem \ref{THMbound-p-1}}\label{thmbound-p-1}
\begin{proof}
	For any measurable function $f:\mathcal{X}\rightarrow \mathbb{R}$, we have 
	\begin{align*} 
		&\hspace{-1cm} |R(f)-R_{\epsilon_n}(f)|\\
		=&E\bigg|\frac{1}{\epsilon_nP(A\,|\,X)}\bigg(\big(1-\alpha\big)I(Y>S)\max(\epsilon_n-(f(X)-A)_+,0)\\
		&+\alpha I(Y \leq S)\max(\epsilon_n-(A-f(X))_+,0)\bigg)\bigg|\\
		=&E_X\bigg|\frac{1}{\epsilon_n}\int_{a\in [a_L,a_U]}\bigg(\big(1-\alpha\big)P(Y>S|A=a,X)\max(\epsilon_n-(f(X)-a)_+,0)\\
		&\qquad\qquad +\alpha P(Y \leq S|A=a,X)\max(\epsilon_n-(a-f(X))_+,0)\bigg)da\;\bigg|\\
		=&E_X\bigg|\frac{\big(1-\alpha\big)}{\epsilon_n}\int_{a\in [f(X)-\epsilon_n,f(X)]}P(Y>S|A=a,X)(\epsilon_n-(f(X)-a))da\\
		&\qquad\qquad +\frac{\alpha}{\epsilon_n}\int_{a\in [f(X),f(X)+\epsilon_n]} P(Y \leq S|A=a,X)(\epsilon_n-(a-f(X)))da\;\bigg|\\
		=&E_X\bigg|\frac{\big(1-\alpha\big)}{\epsilon_n}\int_{z\in[0,\epsilon_n]} P(Y>S|A=f(X)-z,X)(\epsilon_n-z)dz\\
		&\qquad\qquad +\frac{\alpha}{\epsilon_n}\int_{z\in[0,\epsilon_n]}  P(Y \leq S|A=f(X)+z,X)(\epsilon_n-z)dz\;\bigg|\\
		&\leq \epsilon_n
	\end{align*}
	Therefore Theorem \ref{THMbound-p-1} follows. 
\end{proof}

\subsection{Proof of Theorem \ref{THMoptimalrate-1}}\label{thmoptimalrate-1}
\begin{proof}

	According to Theorem \ref{THMbound-p-1}, we have 
	\begin{align}
		R(\hat f_L)-R( f_{opt})&\leq R_{\epsilon_n}(\hat f_L)-R_{\epsilon_n}( f_{L,opt})+2C\epsilon_n\nonumber\\
		&\leq R_{\epsilon_n}(\hat f_L)-R_{\epsilon_n}( f_{L}^*)+2C\epsilon_n\nonumber\\
		&\leq \bigg(\lambda_n||\hat f_L||_k^2+R_{\epsilon_n}(\hat f_L)-R_{\epsilon_n}^*\bigg)+  \bigg(2C\epsilon_n   \bigg) \nonumber\\
		&= (I)+(II)\label{totalupper}
	\end{align}
	
	where $f_{L}^*$ is the minimizer of $R_{\epsilon_n}(\cdot)$ and $R_{\epsilon_n}^*=R_{\epsilon_n}(f_{L}^*)$. \\
	
	To bound  \eqref{totalupper}, we need to first bound $(I)$. As in \cite{chen2016personalized}, we refer to Theorem 7.23 in \cite{steinwart2008support}. In order to use the oracle inequality in the theorem, there are four conditions to be satisfied.
	\begin{itemize}
		\item (B1) The loss function $L(\cdot)$ has a supremum bound $L(\cdot)\leq B$ for a constant $B>0$.
		\item (B2) The loss function $L(\cdot)$ is locally Lipschitz continuous and can be clipped at a constant $M>0$ such that $\tilde f=I(|f|\leq M)f+I(|f|\geq M)M$.
		\item (B3) The variance bound $E_P\big(L\circ \tilde{f} -L\circ f^*_{L,P}\big)^2\leq V\Big(E_P\big(L\circ \tilde{f} -L\circ f^*_{L,P}\big)\Big)^v$ is satisfied for  a constant $v\in[0,1]$, $V\geq B^{2-v}$ and all $f\in \mathcal{H}$.
		\item (B4) For fixed $n\geq 1$, there exist constants $p\in (0,1)$ and $a\geq B$ such that the entropy number $E_{D_X\sim P_X^n} e_i(id:\mathcal{H}\rightarrow L_2(D_X))\leq ai^{-\frac{1}{2p}}$, $i\leq 1$.
	\end{itemize} 
		
	To verify Condition (B1), recall that the relaxed loss function of the lower bound of the one-sided PDI is 
	\begin{align*}
		&L_{\epsilon}(X,A,f_L(X))=\frac{1}{P(A\,|\,X)}\Big\{\big(1-\alpha\big)I(Y > S) \Psi_\epsilon({f}_L(X),A) +\alpha I( Y \leq S) \Psi_\epsilon(A,{f}_L(X))\Big\}
	\end{align*}
	
	Assuming the inverse probability ${1}/{P(A\,|\,X)}$ is bounded by a constant $B$, then  $L_{\epsilon}(X,A,f_L(X))$ is naturally bounded by $B$ as well.
	
	To verify Condition (B2), similar to (B1), $L_{\epsilon}(X,A,f_L(X))$ is Lipschitz continuous with a Lipschitz constant $B$. Also, $f_L$ can be clipped to have a smaller risk. Suppose that $M$ is an upper bound of the absolute value of a reasonable range of dose, say, $M=\max\{|a_L|,|a_U|\}$, and $\tilde f_L=I(|f_L|\leq M)f_L+I(|f_L|\geq M)M$. It naturally follows that $R(\tilde f_L)\leq R(f_L)$, since any unreasonably large dose recommendation introduces a larger risk. 
	
	Condition (B3) is satisfied with $v=0$ and $V=4B^2$,  because  
	\begin{align}
		E\big[\big(L_{\epsilon}\circ \tilde f_L -L_{\epsilon}\circ f_L^*\big)^2\big]\leq 2E\big[\big(L_{\epsilon}\circ \tilde f_L\big)^2 +\big(L_{\epsilon}\circ f_L^*\big)^2\big]\leq 4B^2\nonumber
	\end{align}
	
	The Gaussian Kernel used in Sections \ref{Simulation} and  \ref{RealData} is one type of benign kernels. According to Theorem 7.34 of \cite{steinwart2008support}, (B4) is satisfied with the constant $a=c_{\epsilon,p}\gamma_n^{-\frac{(1-p)(1+\zeta )d}{2p}}$, where $d/(d+\tau)<p<1$ and $\zeta >0$ are two constants. 
	
	Since Conditions (B1)-(B4) are satisfied in our case, applying Theorem 7.23 from \cite{steinwart2008support} yields 
	\begin{align}
		(I)&\leq 9*\bigg \{\lambda_n|| f_L^0||_k^2+R_{\epsilon_n}(f_L^0)-R_{\epsilon_n}^*\bigg\}+K_0\left[\frac{1}{\gamma_n^{(1-p)(1+\zeta)d}\lambda_n^pn}\right]^{\frac{1}{2-p}}+36\sqrt{2}B\Big(\frac{\tau}{n}\Big)^{\frac{1}{2}}+\frac{15B\tau}{n}\label{approximation error}
	\end{align}
	with probability $1-3e^{-\tau}$.
	
	In order to bound $\big(\lambda_n|| f_L^0||_k^2+R_{\epsilon_n}(f_L^0)-R_{\epsilon_n}^*\big)$, the approximation error, we refer to  Section 2 in \cite{eberts2013optimal}. As \eqref{approximation error} holds for any $f_L^0\in \mathcal{H}_\gamma$, we construct a specific $f_L^0$ to facilitate the proof. First, similar to  Equation (8) in \cite{eberts2013optimal}, we define a function $\mathcal{K}(x)=\sum_{j=1}^{r} {r\choose j}(-1)^{1-j}\frac{1}{j^d}(\frac{2}{\gamma^2 })^{d/2}\mathcal{K}_{j\gamma/\sqrt{2}}(x)$ where $\mathcal{K}_\gamma=\exp(-\gamma^2||x||_2^2)$ for all $x\in R^d$, and subsequently define $f_L^0$ via convolution as follows,
	\begin{align*}
		f_L^0=\mathcal{K}*f_L^*=\int_{R^d}\mathcal{K}(X-t)f_L^*(t)dt,\;\; x\in R^d
	\end{align*}
	
	If we assume $f^*_L\in L_2(R^d)\cap L_\infty(R^d)$, then from Theorem 2.3 in  \cite{eberts2013optimal}, it can be shown that $f_L^0\in \mathcal{H}_\gamma$, where $\mathcal{H}_\gamma$ is the RKHS with the Gaussian kernel. In addition, Theorem 2.2 provides the upper bound for the access risk of $f_L^0$, which can be incorporated to yield the following results. 
	\begin{align}
		&\lambda_n|| f_L^0||_k^2+R_{\epsilon_n}(f_L^0)-R_{\epsilon_n}^*\nonumber\\
		&=\lambda_n|| \mathcal{K}*f_L^*||_k^2+R_{\epsilon_n}(\mathcal{K}*f_L^*)-R_{\epsilon_n}^*\nonumber\\
		&\leq \lambda_n(\gamma_n\sqrt{\pi})^{-d}(2^r-1)^2||f_L^*||^2_{L_2(R^d)}+R_{\epsilon_n}(\mathcal{K}*f_L^*)-R_{\epsilon_n}^*\nonumber\\
		&\leq \lambda_n(\gamma_n\sqrt{\pi})^{-d}(2^r-1)^2||f_L^*||^2_{L_2(R^d)}+B|\mathcal{K}*f_L^*-f_L^*|_{L_1(P_X)}\nonumber\\
		&\leq \lambda_n(\gamma_n\sqrt{\pi})^{-d}(2^r-1)^2||f_L^*||^2_{L_2(R^d)}+BC_{r,1}||g||_{L_p(P_X)}\omega_{r,L_1(R^d)}(f^*_L,\gamma_n/2)\label{access risk} 
	\end{align}
	
	Given the fact that $f_{opt}\in B_{1,\infty}^\delta(R^d)$, if we further assume $f_{L}^*\in B_{1,\infty}^\delta(R^d)$, i.e., $B^\delta_{1,\infty}(R^d)=\{f\in L_\infty (R^d):\sup_{t>0} (t^{-\delta}\omega_{r,L_1(R^d)})(f,t)<\infty\}$, then $w_{r,L_1}(R^d)(f_L^*,\gamma_n/2)<c_0\gamma^\delta$, where $c_0$ is a constant. Plugging this into \eqref{access risk}, we obtain
	\begin{align*}
		\lambda_n|| f_L^0||_k^2+R_{\epsilon_n}(f_L^0)-R_{\epsilon_n}^*\leq c_1\lambda_n\gamma_n^{-d}+c_2\gamma_n^\delta
	\end{align*}
	
	After combining all the parts together, 
	\begin{align*}
		R(\hat f_L)-R( f_{opt})\leq c_1\frac{\lambda_n}{\gamma_n^d}+c_2\gamma_n^\delta+c_3\frac{1}{\gamma_n^{\frac{(1-p)(1+\zeta)d}{2-p}}\lambda_n^\frac{p}{2-p}n^\frac{1}{2-p}}+c_4\frac{\tau^{\frac{1}{2}}}{n^{\frac{1}{2}}}+c_5\frac{\tau}{n}+c_6\epsilon_n
	\end{align*}
	By properly choosing the constant, i.e.
	\begin{align*}
		\gamma_n\propto \bigg(\frac{1}{n}\bigg)^{\frac{1}{2\delta+d}}\hspace{0.05\textheight}\lambda_n\propto \bigg(\frac{1}{n}\bigg)^{\frac{\delta+d}{2\delta+d}}\hspace{0.05\textheight}\epsilon_n=\mathcal{O}\bigg(n^{-\frac{\delta}{2\delta+d}}\bigg)
	\end{align*}
	we have the convergence rate:
	\begin{align*}
		R(\hat f_L)-R( f_{opt})=\mathcal{O}\bigg(n^{-\frac{\delta}{2\delta+d}}\bigg)
	\end{align*}
\end{proof}

\section{THEORETICAL RESULTS FOR TWO-SIDED PDI}
To ensure the two-sided PDI exists, we need the following assumptions similar to the one-sided PDI case:

(C3) For any $x$, the boundaries $a_L$ and $a_U$ for the dose range satisfy $P\{Y(a_L)>S \big |x\}\leq \alpha$ and $P\big\{Y(a_U)>S |x \big\}\leq  \alpha$.\\ [8pt]
(C4) There exists a function  $a^{*}(\cdot) \in[a_L,a_U]$ such that $P\{Y(a^*(x))>S|x\}\geq  \alpha$ \\ [8pt]
(C5) For any $x$, there are two numbers $a_{1,x} < a_{2,x}$ with $a_{1,x} \in [a_L,a_U]$ and $a_{2,x} \in [a_L,a_U]$ such that  $P\{Y(a_{1,x})>S |x\} = \alpha$ and  $P\{Y(a_{2,x})>S |x\} = \alpha$.

\begin{corollary}\label{THMconsist-p-2}
	Let $\left[ f_{L,opt}, f_{U,opt} \right]=\arg\min_{f_L\leq f_U} R(f_L,f_U)$. Then, under the assumptions (C3)-(C5), $\left[f_{L,opt},f_{U,opt}\right]$ has the following properties:\\ [8pt]
	(1) $P\big\{Y\big(f_{L,opt}(x)\big)> S \, |\, x \big\}= \alpha \;  \text{and} \; P\big\{Y\big(f_{U,opt}(x)\big)> S \, |\, x \big\}= \alpha, \; \forall x$. \\  [8pt]
	(2) For any measurable function $a(\cdot)$ s.t. $f_{L,opt}(x)\leq a(x)\leq  f_{U,opt}(x)$,  $\forall x\in \mathcal{X}$, the potential outcome $Y(a(x))$ satisfies $P(Y(a(x)) > S|x)\geq \alpha$. \\ [8pt]
	(3)  For any $a_1(x) <f_{L,opt}(x)$, $P\big(Y(a_1(x))>S|x \big) < \alpha$,  and for any $a_2(x) >f_{U,opt}(x)$, $P\big(Y(a_2(x))>S|x\big) < \alpha$.
\end{corollary}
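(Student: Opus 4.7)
The plan is to mirror the proof of Theorem~\ref{THMconsist-p-1} in the two-boundary setting. Let $f_1(x)=a_{1,x}$ and $f_2(x)=a_{2,x}$ be the two crossings supplied by (C5). First, I would extract from (C3)--(C5) the sign structure
\begin{equation*}
P(Y(a)>S\,|\,x)\geq\alpha \;\text{ on }\; [f_1(x),f_2(x)], \qquad P(Y(a)>S\,|\,x)\leq\alpha \;\text{ on }\; [a_L,f_1(x)]\cup[f_2(x),a_U].
\end{equation*}
This uses (C3) at the dose endpoints, (C4) to place a point with $P(Y(a)>S\,|\,x)\geq\alpha$ strictly between the two roots, and (C5) to forbid any further sign change. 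Consequently $(f_1,f_2)$ is the natural candidate for $(f_{L,opt},f_{U,opt})$ and is admissible since $f_1\leq f_2$ by (C5).

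Next, for any admissible pair $(f_L',f_U')$ with $a_L\leq f_L'(x)\leq f_U'(x)\leq a_U$, I would form $R(f_L',f_U')-R(f_1,f_2)$. As in the proof of Theorem~\ref{THMconsist-p-1}, ignorability together with the propensity weighting $1/P(A|X)$ reduces the expectation to $E_X\!\left[\int_{a_L}^{a_U}\cdots\,da\right]$. Splitting the two indicator differences over the symmetric difference $[f_L'(x),f_U'(x)]\,\triangle\,[f_1(x),f_2(x)]$, the integrand equals $\alpha-P(Y>S\,|\,a,x)$ on the portion of this symmetric difference lying outside $[f_1(x),f_2(x)]$, and $P(Y>S\,|\,a,x)-\alpha$ on the portion lying inside $[f_1(x),f_2(x)]$. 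By the sign structure above, both are nonnegative, so $R(f_L',f_U')\geq R(f_1,f_2)$, with strict inequality unless the symmetric difference has Lebesgue measure zero in $a$ for almost every $X$. Hence $(f_{L,opt},f_{U,opt})=(f_1,f_2)$ a.e., and properties (1)--(3) follow: (1) is the defining equation of $(f_1,f_2)$ in (C5); (2) is the upper half of the sign structure applied to $a(x)\in[f_1(x),f_2(x)]$; (3) is the strict form of the lower half applied to $a_1(x)<f_1(x)$ or $a_2(x)>f_2(x)$.

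The main obstacle is the case analysis. Unlike the one-sided situation, where $\mathcal{X}$ decomposes cleanly into three pieces, the symmetric difference of two intervals admits several geometric configurations (strict containment in either direction, overlap on one side only, full disjointness), each producing one or two subintervals to account for, and one must verify case by case that every such subinterval lies in the region where the integrand has the correct sign. A secondary subtlety is the strict-inequality step for uniqueness: I need that $P(Y(a)>S\,|\,x)-\alpha$ is not identically zero on a neighborhood of $f_1(x)$ or $f_2(x)$, which is the natural reading of (C5) as saying that $a_{1,x}$ and $a_{2,x}$ are isolated (simple) crossings rather than endpoints of a flat region at level~$\alpha$.
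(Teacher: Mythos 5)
Your proposal is correct and follows essentially the same route as the paper's proof: the paper also writes $R(f'_L,f'_U)-R(f^*_L,f^*_U)$ as an expectation of dose-integrals of $P(Y>S\,|\,A=a,X)-\alpha$ over the regions where the two intervals disagree, and carries out your anticipated case analysis explicitly by partitioning $\mathcal{X}$ into seven subsets $\mathcal{X}_0,\dots,\mathcal{X}_6$ corresponding to the possible configurations (agreement, strict containment either way, one-sided shifts, and full disjointness on either side). The sign structure you extract from (C3)--(C5), and the observation that strictness of property (3) rests on the crossings in (C5) being the only level-$\alpha$ points, match what the paper uses.
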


\begin{proof}
	First, $$F_L\equiv\{f:\forall x\in \mathcal{X}, P(Y(f(x))>S|x)=\alpha\quad and\quad \exists a^*<f(x),\;P(Y(a^*(x))>S|x)\leq \alpha\}$$ $$F_U\equiv\{f:\forall x\in \mathcal{X}, P(Y(f(x))>S|x)=\alpha\quad and\quad \exists a^*>f(x),\;P(Y(a^*(x))>S|x)\leq \alpha\}$$
	
	Let $f^*_L\in F_L$ and $f^*_U\in F_U$, then for arbitrary $f'_L$ and $f'_U$, the sample space of $\mathcal{X}$ can be decomposed as $\mathcal{X}=\mathcal{X}_0\cup \mathcal{X}_1\cup \mathcal{X}_2\cup \mathcal{X}_3\cup \mathcal{X}_4 \cup \mathcal{X}_5 \cup \mathcal{X}_6$
	where 
	\begin{align*}
		&\mathcal{X}_0=\{x:x\in\mathcal{X},\quad\exists f_L\in F_L\; s.t.\;f'_L(x)=f_L(x)\quad and \quad \exists f_U\in F_U\; s.t.\;f'_U(x)=f_U(x)\}\\
		&\mathcal{X}_1=\{x:x\in\mathcal{X},\quad \max_{f_L\in F_L} f_L(x)< f'_L(x)\leq   f'_U(x)<\min_{f_U\in F_U} f_U(x) \}\\
		&\mathcal{X}_2=\{x:x\in\mathcal{X},\quad f'_L(x)<\min_{f_L\in F_L} f_L(x)\leq  \max_{f_U\in F_U} f_U(x)<f'_U(x) \}\\
		&\mathcal{X}_3=\{x:x\in\mathcal{X},\quad f'_L(x)\leq \min_{f_L\in F_L} f_L(x)\leq f'_U(x)\leq  \max_{f_U\in F_U} f_U(x) \}\\
		&\mathcal{X}_4=\{x:x\in\mathcal{X},\quad  \min_{f_L\in F_L} f_L(x)\leq f'_L(x)\leq  \max_{f_U\in F_U} f_U(x)\leq f'_U(x) \}\\
		&\mathcal{X}_5=\{x:x\in\mathcal{X},\quad f'_L(x) \leq f'_U(x)<\min_{f_L\in F_L} f_L(x)\leq  \max_{f_U\in F_U} f_U(x) \}\\
		&\mathcal{X}_6=\{x:x\in\mathcal{X},\quad  \min_{f_L\in F_L} f_L(x)\leq   \max_{f_U\in F_U} f_U(x)<f'_L(x)\leq f'_U(x) \}
	\end{align*}
	Then
	\begin{align*}
		R(f')-R(f^*) =&E\bigg[\frac{1}{P(A\,|\,X)}\bigg(\big(1-\alpha\big)I\big(Y>S\big)\Big(I(A\notin [f'_L(X),f'_U(X)])-I(A\notin [f^*_L(X),f^*_U(X)])\Big)\\
		&+\alpha I\big(Y \leq S\big)\Big(I(A\in [f'_L(X),f'_U(X)])-I(A\in [f^*_L(X),f^*_U(X)])\Big)\bigg)\bigg] \\
		= &E\left[\dots\big|X\in \mathcal{X}_1\right]P(X\in \mathcal{X}_0) \tag{M0}\\
		 &+E\left[\dots\big|X\in \mathcal{X}_1\right]P(X\in \mathcal{X}_1) \tag{M1}\\
		&+E\left[\dots\big|X\in \mathcal{X}_2\right]P(X\in \mathcal{X}_2) \tag{M2}\\
		&+E\left[\dots\big|X\in \mathcal{X}_3\right]P(X\in \mathcal{X}_3) \tag{M3}\\
		&+E\left[\dots\big|X\in \mathcal{X}_4\right]P(X\in \mathcal{X}_4) \tag{M4}\\
		&+E\left[\dots\big|X\in \mathcal{X}_5\right]P(X\in \mathcal{X}_5) \tag{M5}\\
		&+E\left[\dots\big|X\in \mathcal{X}_6\right]P(X\in \mathcal{X}_6) \tag{M6}    
	\end{align*}
	where  $(M3)\geq 0$, $(M3)>0$ if $P(X\in \mathcal{X}_3)>0$, and $(M4)\geq 0$, $(M4)>0$ if $P(X\in \mathcal{X}_3)>0$, according to the results of Theorem \ref{THMconsist-p-1}. These results can be directly applied since it is easily seen that within $\mathcal{X}_3$ and $\mathcal{X}_4$,  assumptions for Theorem \ref{THMconsist-p-1} are satisfied. Obviously $(M0)=0$ and
	\begin{align*}    
		(M1)=&E\Bigg[\int_{f^*_L(X)}^{f'_L(X)}\Big(\big(1-\alpha\big)P(Y>S|A=a,X)-\alpha P(Y \leq S|A=a,X)\Big)da\\
		&+\int^{f^*_U(X)}_{f'_U(X)}\Big(\big(1-\alpha\big)P(Y>S|A=a,X)-\alpha P(Y \leq S|A=a,X)\Big)da\bigg|X\in \mathcal{X}_1\Bigg]P(X\in \mathcal{X}_1) \\
		=&E\bigg[\int_{\max_{f_L\in F_L}f_L(X)}^{f'_L(X)}\Big(P(Y>S|A=a,X)-\alpha \Big)da\\
		&+\int^{\min_{f_U\in F_U}f_U(X)}_{f'_U(X)}\Big(P(Y>S|A=a,X)-\alpha \Big)da\bigg|X\in \mathcal{X}_1\bigg]P(X\in \mathcal{X}_1) 
\end{align*}
Therefore $(M1)\geq 0$ and $>0$  if $P(X\in \mathcal{X}_1)\neq 0$. Similarly,	\begin{align*}    
		(M2)=&E\bigg[\int^{f^*_L(X)}_{f'_L(X)}\Big(-\big(1-\alpha\big)P(Y>S|A=a,X)+\alpha P(Y \leq S|A=a,X)\Big)da\\
		&+\int_{f^*_U(X)}^{f'_U(X)}\Big(-\big(1-\alpha\big)P(Y>S|A=a,X)+\alpha P(Y \leq S|A=a,X)\Big)da\bigg|X\in \mathcal{X}_2\bigg]P(X\in \mathcal{X}_2) \\
		=&E\bigg[\int^{\min_{f_L\in F_L}f_L(X)}_{f'_L(X)}\Big(-P(Y>S|A=a,X)+\alpha \Big)da\\
		&+\int_{\max_{f_U\in F_U}f_U(X)}^{f'_U(X)}\Big(-P(Y>S|A=a,X)+\alpha \Big)da\bigg|X\in \mathcal{X}_2\bigg]P(X\in \mathcal{X}_2) \end{align*}
Therefore	$(M2)\geq 0$ and $>0$  if $P(X\in \mathcal{X}_2)\neq 0$. 
	\begin{align*}
		(M5)=&E\Bigg[\int_{f'_L(X)}^{f'_U(X)}\Big(-\big(1-\alpha\big)P(Y>S|A=a,X)+\alpha P(Y \leq S|A=a,X)\Big)da\\
		&+\int^{f^*_L(X)}_{f^*_U(X)}\Big(\big(1-\alpha\big)P(Y>S|A=a,X)-\alpha P(Y \leq S|A=a,X)\Big)da\bigg|X\in \mathcal{X}_5\Bigg]P(X\in \mathcal{X}_5) \\
		=&E\bigg[\int_{f'_L(X)}^{f'_U(X)}\Big(-P(Y>S|A=a,X)+\alpha \Big)da\\
		&+\int^{f^*_L(X)}_{f^*_U(X)} \Big(P(Y>S|A=a,X)-\alpha \Big)da\bigg|X\in \mathcal{X}_5\bigg]P(X\in \mathcal{X}_5) 
	\end{align*}
	Therefore	$(M5)\geq 0$ and $>0$  if $P(X\in \mathcal{X}_2)\neq 0$. 
	
	\begin{align*}
		(M6)=&E\Bigg[\int_{f'_L(X)}^{f'_U(X)}\Big(-\big(1-\alpha\big)P(Y>S|A=a,X)+\alpha P(Y \leq S|A=a,X)\Big)da\\
		&+\int^{f^*_L(X)}_{f^*_U(X)}\Big(\big(1-\alpha\big)P(Y>S|A=a,X)\\
		&\hspace{0.15\textheight}-\alpha P(Y \leq S|A=a,X)\Big)da\bigg|X\in \mathcal{X}_6\Bigg]P(X\in \mathcal{X}_6) \\
		=&E\bigg[\int_{f'_L(X)}^{f'_U(X)}\Big(-P(Y>S|A=a,X)+\alpha \Big)da\\
		&+\int^{f^*_L(X)}_{f^*_U(X)} \Big(P(Y>S|A=a,X)-\alpha \Big)da\bigg|X\in \mathcal{X}_6\bigg]P(X\in \mathcal{X}_6) 
	\end{align*}
		Therefore	$(M6)\geq 0$ and $>0$  if $P(X\in \mathcal{X}_6)\neq 0$. Consequently, $R(f'_U,f'_U) - R(f^*_L,f^*_U)\geq 0$, and $R(f'_U,f'_U)-R(f^*_L,f^*_U)=0$ if and only if $P(X\notin \mathcal{X}_0)=0$. Hence for any optimal two-sided bound functions $[f_{L,opt},f_{U,opt}]=\arg \min_{f_L\leq f_U} R(f_L,f_U)$, we have  $f_{L,opt}\in F_L$ and $f_{L,opt}\in F_U$. According to the assumptions (C3), (C4) and (C5), we have that $P\left(Y\big(f_{L,opt}(x)\big)>S|x \right)=P\left(Y\big(f_{U,opt}(x)\big)>S| x\right)=\alpha$ and $P\left(Y(a'_{x} )>S| x\right)\geq \alpha$ for $\forall x\in\mathcal{X}$, where  $a'(x)$ is an arbitrary measurable function s.t.  $f_{L,opt}(x)\leq a'(x)\leq f_{U,opt}(x)$. 
\end{proof}

\begin{corollary}\label{THMbound-p-2}
	For any interval of two measurable functions  $\left[f_L(x),f_U(x)\right]$, where  $f_L(x)\leq f_U(x), \forall x\in\mathcal{X} $, we have $|R(f_L,f_U)-R_{\epsilon_n}(f_L,f_U)|\leq C\epsilon_n$, where $C$ is a constant. 
\end{corollary}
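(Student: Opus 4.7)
The plan is to mimic the proof of Theorem \ref{THMbound-p-1} by observing that $\Psi_\epsilon^{in}(f_L(X),A,f_U(X))$ agrees with the indicator $I\{A\in[f_L(X),f_U(X)]\}$ except on two transition bands of width at most $\epsilon_n$: one centered near the left boundary, namely $[f_L(X),f_L(X)+\epsilon_n]$, where the trapezoid ramps from $0$ to $1$, and one near the right boundary, namely $[f_U(X)-\epsilon_n,f_U(X)]$, where it ramps from $1$ to $0$. The same holds for $\Psi_\epsilon^{out}=1-\Psi_\epsilon^{in}$ with respect to $I\{A\notin[f_L(X),f_U(X)]\}$. Consequently, the pointwise gap $\big|I(\cdot)-\Psi_{\epsilon_n}(\cdot)\big|$ is zero outside these bands and at most $1$ inside them.

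First I would condition on $X$ and rewrite the expectation defining $R(f_L,f_U)-R_{\epsilon_n}(f_L,f_U)$ as an integral over $a$ against the conditional density $P(a\,|\,X)$, so that the inverse-propensity weight $1/P(A\,|\,X)$ cancels the density and leaves Lebesgue integration in $a$ over $[a_L,a_U]$. The integrand then factors into the conditional probabilities $P(Y>S\mid A=a,X)$ and $P(Y\le S\mid A=a,X)$, both trivially bounded by $1$, multiplied by the pointwise gap between the indicator and its trapezoidal relaxation.

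Second, I would split the $a$-integration into three regions: the ``safe'' region where $\Psi_{\epsilon_n}^{in}$ matches the indicator exactly (and the integrand is therefore zero), the left transition band $[f_L(X),f_L(X)+\epsilon_n]$, and the right transition band $[f_U(X)-\epsilon_n,f_U(X)]$. On each transition band the integrand is at most a constant, and the band has length at most $\epsilon_n$. Summing the two contributions gives $|R(f_L,f_U)-R_{\epsilon_n}(f_L,f_U)|\le C\epsilon_n$ with $C$ depending only on $\alpha$ and a uniform bound on the conditional probabilities.

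The main subtlety is the degenerate case when $f_U(X)-f_L(X)<2\epsilon_n$, so that the two transition bands overlap and $\Psi_{\epsilon_n}^{in}$ never attains the value $1$; here one cannot cleanly separate the bands. In that case I would fall back on the observation that the total length of the interval over which the approximation and the indicator disagree is still at most $2\epsilon_n$ (the entire interval $[f_L(X),f_U(X)]$ together with the two outward spillovers of combined length $\epsilon_n$), and the pointwise gap remains bounded by $1$. Applying the same piecewise-linear definition of $\Psi_{\epsilon_n}^{in}$ given in the paper, one verifies that the resulting contribution is still $O(\epsilon_n)$, completing the proof along the same lines as Theorem \ref{THMbound-p-1}.
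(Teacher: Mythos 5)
Your proposal is correct and follows essentially the same route as the paper's proof: condition on $X$, cancel the inverse propensity weight against the conditional density of $A$ to reduce to Lebesgue integration in $a$, and bound the integral of the indicator--trapezoid gap over transition bands of total length $O(\epsilon_n)$ with integrand bounded by a constant. The only differences are cosmetic --- the paper's displayed computation places the ramps of the ``out'' loss just outside $[f_L,f_U]$ (four bands), whereas you read $\Psi^{out}_\epsilon=1-\Psi^{in}_\epsilon$ literally (two bands inside the interval), and you additionally treat the degenerate case $f_U-f_L<2\epsilon_n$, which the paper silently ignores; neither point changes the $C\epsilon_n$ bound.
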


\begin{proof}
	For any measurable function $f_L:\mathcal{X}\rightarrow \mathbb{R}$, $f_U:\mathcal{X}\rightarrow \mathbb{R}$ and $f_L(X)\leq f_U(X)\forall x\in \mathcal{X}$, we have 
	\begin{align*} 
		|&R(f)-R_{\epsilon_n}(f)|\\
		&=E_X\bigg|\frac{\big(1-\alpha\big)}{\epsilon_n}\int_{a\in [f_L(X)-\epsilon_n,f_L(X)]}P(Y>S|A=a,X)(\epsilon_n-(f_L(X)-a))da\\
		&\qquad\qquad +\frac{\alpha}{\epsilon_n}\int_{a\in [f_L(X),f_L(X)+\epsilon_n]} P(Y \leq S|A=a,X)(\epsilon_n-(a-f_L(X)))da\\
		&\qquad\qquad +\frac{\alpha}{\epsilon_n}\int_{a\in [f_U(X)-\epsilon_n,f_U(X)]} P(Y \leq S|A=a,X)(\epsilon_n-(f_U(X)-a))da\\
		&\qquad\qquad +\frac{\big(1-\alpha\big)}{\epsilon_n}\int_{a\in [f_U(X),f(X)_R+\epsilon_n]}P(Y>S|A=a,X)(\epsilon_n-(a-f_U(X)))da\;\bigg|\\
		&=E_X\bigg|\frac{\big(1-\alpha\big)}{\epsilon_n}\int_{z\in[0,\epsilon_n]} P(Y>S|A=f_L(X)-z,X)(\epsilon_n-z)dz\\
		&\qquad\qquad +\frac{\alpha}{\epsilon_n}\int_{z\in[0,\epsilon_n]}  P(Y \leq S|A=f_L(X)+z,X)(\epsilon_n-z)dz\\
		&\qquad\qquad +\frac{\alpha}{\epsilon_n}\int_{z\in[0,\epsilon_n]}  P(Y \leq S|A=f_U(X)-z,X)(\epsilon_n-z)dz\\
		&\qquad\qquad +\frac{\big(1-\alpha\big)}{\epsilon_n}\int_{z\in[0,\epsilon_n]} P(Y>S|A=f_U(X)+z,X)(\epsilon_n-z)dz\; \bigg|\\
		&=2 \epsilon_n. 
	\end{align*}
	
\end{proof}

We provide a convergence rate by borrowing the results from Theorem \ref{THMoptimalrate-1}, and replace the convergence of the two-sided PDI estimator with two estimators of one-sided PDIs. In order to do so, extra assumptions have to be made. It would be an interesting future work if better rates under weaker assumptions can be achieved.
\begin{corollary}\label{THMoptimalrate-2}
	Assume that $f_{ L,opt}\in B^{\delta}_{1,\infty}(\mathbb{R}^d)$ and $f_{ U,opt}\in B^{\delta}_{1,\infty}(\mathbb{R}^d)$ where $B^{\delta}_{1,\infty}(\mathbb{R}^d)=\{f\in L_{\infty}((\mathbb{R}^d)):\sup_{t>0} (t^{-\delta}w_{(r,L_1)}(f,t))<\infty\}$ and $w$ is the modulus of continuity. In addition, assume that there exists a measurable function $f_{M,opt}(x)$, which is known or can be estimated consistently, such that $f_{L,opt}(x)< f_{M}(x)<f_{U,opt}(x)$ for all $x\in \mathcal{X}$. Then for any $\eta>0$, $d/(d+\tau)<p<1$, $\tau>0$, $0<c<1$,and properly chosen $\gamma_n$, $\lambda_n$ and $\epsilon_n$, we have 
	\begin{equation} 
		R(\hat f_{L,n},\hat f_{U,n})-R(f_{L,opt},f_{U,opt})=\mathcal{O}_p\Bigg(n^{-\frac{\delta}{2\delta+d}}\Bigg)
	\end{equation}
	with probability $c \big(1-3e^{-\tau}\big)^2$. Here $d$ is the dimension of $\mathcal{X}$.
\end{corollary}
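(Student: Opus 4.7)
\textbf{Proof proposal for Corollary \ref{THMoptimalrate-2}.} The plan is to reduce the two-sided problem to two coupled one-sided PDI problems via the splitting function $f_{M,opt}$, then apply Theorem \ref{THMoptimalrate-1} to each piece and combine the errors.

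First, I would use $f_{M,opt}$ to partition the training data into $\mathcal{D}_L=\{i:A_i<f_{M,opt}(X_i)\}$ and $\mathcal{D}_U=\{i:A_i>f_{M,opt}(X_i)\}$, and to partition the dose range pointwise into $[a_L,f_{M,opt}(x)]$ and $[f_{M,opt}(x),a_U]$. On the lower half I treat the estimation of $f_L$ as a one-sided PDI with effective upper dose $f_{M,opt}(x)$. I need to verify that Conditions (C1)–(C2) of Section \ref{Theory} hold on this restricted range: at $a_L$, $P\{Y(a_L)>S\mid x\}\le\alpha$ by (C3); at $f_{M,opt}(x)$, $P\{Y(f_{M,opt}(x))>S\mid x\}\ge\alpha$ by the assumption $f_{L,opt}(x)<f_{M,opt}(x)<f_{U,opt}(x)$ together with (C4); and uniqueness of the crossing $a_{1,x}$ follows from (C5). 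A symmetric argument on $\mathcal{D}_U$, after negating the dose as noted in Section \ref{DC}, reduces estimation of $f_U$ to another one-sided PDI satisfying (C1)–(C2).

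Second, I would invoke Theorem \ref{THMoptimalrate-1} on each subproblem. Since the Besov smoothness assumption is imposed on $f_{L,opt}$ and $f_{U,opt}$ as elements of $B^{\delta}_{1,\infty}(\mathbb{R}^d)$, and since by positivity each subsample size is $\Theta(n)$, choosing $\gamma_n,\lambda_n,\epsilon_n$ as in Theorem \ref{THMoptimalrate-1} yields
\begin{equation*}
R(\hat f_{L,n})-R(f_{L,opt})=\mathcal{O}\bigl(n^{-\delta/(2\delta+d)}\bigr),\quad R(\hat f_{U,n})-R(f_{U,opt})=\mathcal{O}\bigl(n^{-\delta/(2\delta+d)}\bigr),
\end{equation*}
each with probability at least $1-3e^{-\tau}$. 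Because $f_{L,opt}(x)<f_{M,opt}(x)<f_{U,opt}(x)$, the integrand defining $R(f_L,f_U)$ in \eqref{two-sided PDI} can be split according to $\{A<f_{M,opt}(X)\}$ versus $\{A>f_{M,opt}(X)\}$, so that the two-sided excess risk decomposes cleanly:
\begin{equation*}
R(\hat f_{L,n},\hat f_{U,n})-R(f_{L,opt},f_{U,opt})\le\bigl\{R(\hat f_{L,n})-R(f_{L,opt})\bigr\}+\bigl\{R(\hat f_{U,n})-R(f_{U,opt})\bigr\}.
\end{equation*}
A union bound over the two independent (conditional on $f_{M,opt}$) sub-sample events gives joint probability at least $(1-3e^{-\tau})^2$, yielding the desired rate.

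The main obstacle is handling the case in which $f_{M,opt}$ is estimated rather than known. A misassignment of an observation to $\mathcal{D}_L$ instead of $\mathcal{D}_U$ (or vice versa) happens on the event $\{\hat f_{M,n}(x) \text{ is on the wrong side of } A\}$, whose probability mass shrinks with the consistency rate of $\hat f_{M,n}$. I would argue that if $\hat f_{M,n}$ is consistent (e.g.\ via \citealt{chen2016personalized} or the methods cited at the end of Section \ref{DC}), then this misclassification contributes only a lower-order term, at the price of an extra factor $c<1$ in the probability guarantee — this is precisely what the statement $c(1-3e^{-\tau})^2$ records. A secondary technical point is checking that the Besov smoothness of $f_{L,opt}$ and $f_{U,opt}$ is preserved on the restricted dose ranges used by the two subproblems; since the assumption is global on $\mathbb{R}^d$, the restriction inherits the required regularity without additional cost.
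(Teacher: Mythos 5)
Your skeleton matches the paper's: split the data at $f_{M,opt}$, run two one-sided problems, apply Theorem \ref{THMoptimalrate-1} twice, and combine. But there is a genuine gap in the combining step. The corollary's estimator $(\hat f_{L,n},\hat f_{U,n})$ is the \emph{joint} minimizer of the two-sided empirical relaxed risk on the full sample, whereas the objects to which Theorem \ref{THMoptimalrate-1} applies are the two \emph{separately trained} one-sided minimizers $\hat f_L',\hat f_U'$ on $\mathcal{D_L}$ and $\mathcal{D_U}$. Your displayed inequality $R(\hat f_{L,n},\hat f_{U,n})-R(f_{L,opt},f_{U,opt})\le\{R(\hat f_{L,n})-R(f_{L,opt})\}+\{R(\hat f_{U,n})-R(f_{U,opt})\}$ silently identifies the two, and that identification is exactly what needs an argument: minimizing the joint empirical risk only gives $\hat R_{\epsilon_n}(\hat f_L,\hat f_U)\le\hat R_{\epsilon_n}(\hat f_L',\hat f_U')$, not the corresponding inequality for the population risk $R_{\epsilon_n}$. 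The paper bridges this with an additional assumed uniform deviation bound $\sup_{f_L,f_U\in\mathcal{H}}|\hat R_{\epsilon}(f_L,f_U)-R_{\epsilon}(f_L,f_U)|\le\kappa\sqrt{\log(n)/n}$ holding with probability $c$, applied twice, together with the exact decomposition $L_{\epsilon}(X,A,f_L,f_U)-L_{\epsilon}(X,A,f_L)-L_{\epsilon}(X,A,f_U)=-\alpha I(Y\le S)/P(A\,|\,X)$ (a term free of the bound functions, valid when $f_{L,*}-f_{U,*}\le-2\epsilon$), which shows the population minimizers of the joint and split problems coincide. This is also where the factor $c$ in the stated probability comes from --- not, as you conjecture, from misclassification of observations under an estimated $\hat f_{M,n}$. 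The paper handles the estimated-middle-function issue by simply assuming $f_{L,opt}(x)<\hat f_{\max}(x)<f_{U,opt}(x)$ for all $x$; your claim that misassignment contributes "only a lower-order term" is plausible but asserted without proof and is not trivial, since misassigned points change both the effective sample and the effective risk being minimized.

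Two smaller points. Your verification that (C1)--(C2) hold on each restricted dose range is a useful detail the paper largely takes for granted, and your observation that the Besov assumption is over $\mathcal{X}$ (so nothing needs re-checking on restricted dose ranges) is correct, though phrased as if there were something to check. You also omit the $2C\epsilon_n$ relaxation-bias term from Corollary \ref{THMbound-p-2}; this is harmless given the prescribed choice $\epsilon_n=\mathcal{O}(n^{-\delta/(2\delta+d)})$, but it should appear in the chain of inequalities before being absorbed.
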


\begin{proof}
	
	The proof of Corollary \ref{THMoptimalrate-2} consists of two steps. In the first step, we assume that there exists a function or a consistent estimator of a function that separates the dataset into two subsets which we call as left and right subsets, in each of which Conditions (C1) and (C2) are satisfied. In the second step, the integrated excessive risk of the two-sided PDI estimator is bounded by the sum of excessive risks of two independent one-sided PDI estimators.
	
	Suppose there is a function $f_{M,opt}(x)$, such that $f_{L,opt}(x)<f_{M,opt}(x)<f_{U,opt}(x)$ for all $x\in \mathcal{X}$. Such a function is typically unknown in practice, hence we consider the case when $f_{M,opt}(x)$ is estimated from data. One choice of $f_{M,opt}$ is $f_{\max,opt}=\arg\max_{f}E\big[Y|A=f(X),X\big]$, the optimal individual dose rule (IDR) in \cite{chen2016personalized}. Theoretical properties of the estimator $\hat f_{\max}$ have been established in  \cite{chen2016personalized}  and also discussed in \cite{alexander2016Comment}. Despite the fact that it is suggested in \cite{alexander2016Comment} that the O-learning based estimator might convergence in nearly $\mathcal{O}(n^{-1/2})$, the rate established there was based on the risk (or value function) instead of the rule $f_{\max,opt}$ itself. To facilitate proving the properties of the two-sided PDI, we make the following further assumption for the estimated optimal dose function $\hat f_{\max}$.
	\begin{align}
		f_{L,opt}(x)< \hat f_{\max}(x)< f_{U,opt}(x),\quad\forall x\in \mathcal{X}\label{assumption-inside}
	\end{align}
	
	In the second step, two subsets are defined where the lower bound function and the upper bound function can be estimated separately using the corresponding subset, assuming there is a $ f_{M,opt}$, known or estimated, such that $f_{L,opt}(x)<  f_{M,opt}(x)< f_{U,opt}(x)$ for all $x\in \mathcal{X}$. Denote the original dataset as $\mathcal{D}=\{\big(X_i,A_i,Y_i\big),\;1\leq i\leq n\}$  and then we can subsequently define the left subset as $\mathcal{D_L}=\{\big(X_i,A_i,Y_i\big),\;\forall i\; s.t.\; A_i\leq  f_{M,opt}(X_i)\}$ and the right subset as $\mathcal{D_U}=\{\big(X_i,A_i,Y_i\big),\;\forall i\; s.t.\; A_i>  f_{M,opt}(X_i)\}$. These two subsets are independent as there is no shared observations. 
	
	Now define $\hat f_L'=\min_{f_L} \hat R_{\epsilon}(f_L)$ on  $\mathcal{D_L}$ and $\hat f_U'=\min_{f_U} \hat R_{\epsilon}(f_U)$ on  $\mathcal{D_U}$, while $\big\{\hat f_L,\hat f_U\big\}=\min_{f_L,f_U}\hat R_{\epsilon}(f_L,f_U)$ on the dataset of $\mathcal{D}$. Similarly, define $ f_{L,*}'=\min_{f_L} R_{\epsilon}(f_L)$ on the (sub)population of $\mathcal{D_L}$ and $ f_{U,*}'=\min_{f_U} R_{\epsilon}(f_U)$ on the (sub)population of $\mathcal{D_U}$, while $\big\{f_{L,*},f_{U,*}\big\}=\min_{f_L,f_U}R_{\epsilon}(f_L,f_U)$ on the original population.  To simplify the problem, we assume that $f_{L,*}(x)-f_{U,*}(x)\leq -2\epsilon$, since this inequality will hold when $\epsilon\rightarrow 0$ as $n\rightarrow 0$. By the definitions of $\mathcal{D_L}$, $\mathcal{D_U}$ and $\mathcal{D}$, we have $\big\{f_{L,*}',f_{U,*}'\big\}=\big\{f_{L,*},f_{U,*}\big\}$, i.e., the population estimates of the lower and upper bounds estimated jointly  are the same as the population estimates of the lower and upper bounds estimated individually while assuming the other part is known. In addition, we make the following assumption of the uniform convergence of the empirical risk function. 
	
	\begin{align}
		\sup_{f_L,f_U\in\mathcal{H}}\big|\hat R_{\epsilon}(f_L,f_U)-R_{\epsilon}(f_L,f_U)\big|\leq \kappa\sqrt{\frac{\log (n)}{n}}\;\text{with probability }c \label{VC inequality}
	\end{align}
	where $\kappa$ is a constant related to $c$ and the complexity of the RKHS.

	We will show how the loss function of the two-sided PDI can be decomposed as the sum of two one-sided PDI losses plus a quantity not related to the dose. Recall that the loss function for the lower bound of the one-sided PDI is
	\begin{align}
		&L_{\epsilon}(X,A,f_L(X))=\frac{1}{P(A\,|\,X)}\Big\{\big(1-\alpha\big)I(Y > S) )\Psi_\epsilon({f}_L(X),A)+\alpha I( Y \leq S) )\Psi_\epsilon(A,{f}_L(X))\Big\}\nonumber
	\end{align}
	
	The loss function for the upper bound of the one-sided PDI is
	\begin{align}
		&L_{\epsilon}(X,A,f_U(X))=\frac{1}{P(A\,|\,X)}\Big\{\big(1-\alpha\big)I(Y > S) )\Psi_\epsilon(A,{f}_U(X)) +\alpha I( Y \leq S) )\Psi_\epsilon({f}_U(X),A)\Big\}.\nonumber
	\end{align}
	
	And the loss function for the two-sided PDI is
	\begin{align}
		&L_{\epsilon}(X,A,f_L(X),f_U(X))=\frac{1}{P(A\,|\,X)}\Big\{\big(1-\alpha\big)I(Y > S) )\Psi^{out}_\epsilon({f}_L(X),A,f_U(X))\nonumber\\
		&\hspace{0.3\linewidth}+\alpha I( Y \leq S) )\Psi^{in}_\epsilon(f_L(X),A,{f}_U(X))\Big\}.\nonumber
	\end{align}
	
	It follows that the difference between the two-sided loss and the sum of the one-sided losses is a constant which does not contain the bound functions, under the assumption that $f_{L,*}(x)-f_{U,*}(x)\leq -2\epsilon$. 
	\begin{align}
		&L_{\epsilon}(X,A,f_L(X),f_U(X))-L_{\epsilon}(X,A,f_L(X))-L_{\epsilon}(X,A,f_U(X))\nonumber\\
		&=\frac{1}{P(A\,|\,X)}\Big\{\big(1-\alpha\big)I(Y > S) \Psi_\epsilon(A,f_U(X)) +\alpha I( Y \leq S)\big(\Psi_\epsilon(f_U(X),A)-1\big)\Big\}-L_{\epsilon}(X,A,f_U(X))\nonumber\\
		&=\frac{-1}{P(A\,|\,X)}\Big\{\alpha I( Y \leq S)\Big\}\nonumber
	\end{align}
	
	Combined with Corollary \ref{THMbound-p-2}, we have  
	\begin{align}
		R(\hat f_L,&\hat f_U)-R( f_{L,opt},f_{U,opt})\leq R_{\epsilon_n}(\hat f_L,\hat f_U)-R_{\epsilon_n}( f_{L,opt},f_{U,opt})+2C\epsilon_n\nonumber\\
		&\leq R_{\epsilon_n}(\hat f_L,\hat f_U)-R_{\epsilon_n}( f_{L,*},f_{U,*})+2C\epsilon_n\nonumber\\
		&\leq R_{\epsilon_n}(\hat f_L',\hat f_U')-R_{\epsilon_n}( f_{L,*},f_{U,*})+2C\epsilon_n+2\kappa\sqrt{\frac{\log(n)}{n}}\nonumber\\
		&= R_{\epsilon_n}(\hat f_L')-R_{\epsilon_n}( f_{L,*}')+R_{\epsilon_n}(\hat f_U')-R_{\epsilon_n}( f_{U,*}')+2C\epsilon_n+2\kappa\sqrt{\frac{\log(n)}{n}}\label{two-sided rate final}
	\end{align}
	
	with probability $c$ as $n\rightarrow \infty$. The probability $c$ comes from the inequality \eqref{VC inequality}. The first inequality of \eqref{two-sided rate final} is guaranteed by Theorem \ref{THMconsist-p-2}. The second inequality is from the definition of $f_{L,*}$ and $f_{U,*}$. The third inequality is from the assumption \eqref{VC inequality}. The coefficient is $2\times \kappa$ as the uniform bound has to be applied twice because the definitions of $\hat f_L$ and $\hat f_U$ only guarantee $\hat R_{\epsilon_n}(\hat f_L,\hat f_U)<\hat R_{\epsilon_n}(\hat f_L',\hat f_U')$  but not  $R_{\epsilon_n}(\hat f_L,\hat f_U)<R_{\epsilon_n}(\hat f_L',\hat f_U')$.  The last equality is from the decomposition of the two-sided loss and the equivalence of $\big\{f_{L,*}',f_{U,*}'\big\}$ and $\big\{f_{L,*},f_{U,*}\big\}$. 
	
	The theorem follows by applying \eqref{approximation error} twice, because $\hat f_L'$ and $\hat f_U'$ are independently estimated on two non-overlapping datasets $\mathcal{D_L}$ and $\mathcal{D_U}$. That is, $R_{\epsilon_n}(\hat f_L')-R_{\epsilon_n}( f_{L,*}')$ is independent with $R_{\epsilon_n}(\hat f_U')-R_{\epsilon_n}( f_{U,*}')$. The uniform bound \eqref{VC inequality} and Theorem \ref{THMoptimalrate-1} hold simultaneously with a probability of at least  $c(1-3e^{-\tau})^2$. Here $\epsilon_n$ only shows up in the term $2C\epsilon_n$. Hence by choosing $\epsilon_n=\mathcal{O}\big(n^{-\frac{\delta}{2\delta+d}}\big)$, it will not impact the convergence rate. The term $2\kappa\sqrt{\frac{\log(n)}{n}}$ corresponds to the convergence rate of $\mathcal{O}\big(\sqrt{\frac{\log n}{n}}\big)$. Hence, \eqref{two-sided rate final} is dominated either by the first two risk terms or by $2\kappa\sqrt{\frac{\log(n)}{n}}$.
\end{proof}

\section{ADDITIONAL SIMULATIONS WITH NO CONFOUNDING COVARIATES}

In this section, we consider scenarios similar to those in the main paper except that there is no confounding. In particular, the treatment $A$ of the training data  follows a uniform distribution  $\text{Unif}(-2,2)$ (i.e., independent of $X$) while other data generating mechanisms are the same as in the main paper. We repeat each setting for $100$ times and report results by evaluating the empirical risks on corresponding independent testing data sets with sample sizes of $10000$.  Similar to the settings with confounding, our methods perform better than the competitive methods. 

\begin{table*}[!htbp]
	\centering
	\caption{Empirical risk $R(\hat{f}_L)$ (with SD) under settings without confounding covariates} \label{simresults_supp}
	\begin{tabular}{clllllll}
		\hline
		& n  & d & L-O-Learning & G-O-Learning& Logistic & SVM & RF \\ 
		\hline
		Scenario 1 & 200 & 10  & 0.137 (0.008)  & \textbf{0.132 (0.011) } & 0.154 (0.024) & 0.139 (0.011) & 0.136 (0.008) \\ 
		$\sigma^2 = 9$				& 200 & 50  &  0.157 (0.014) & \textbf{0.133(0.011) }& 0.185(0.023) & 0.181 (0.043) & 0.135 (0.012) \\ 
		& 400 & 10  & 0.128 (0.006)  & \textbf{0.126 (0.006)} & 0.132(0.011) & 0.129 (0.006) & 0.130 (0.009) \\ 
		& 400 & 50  &  0.154 (0.006) & \textbf{0.128 (0.007) }& 0.174 (0.022) & 0.142 (0.008) & 0.129 (0.008) \\ 
		\hline
		Scenario 2  & 200 & 10 & 0.136 (0.008) & \textbf{0.130 (0.011}) & 0.153 (0.022) & 0.138 (0.011) & 0.133 (0.012) \\ 
		$\sigma^2 = 9$   	& 200 & 50  & 0.159 (0.013)  & \textbf{0.134 (0.010) }& 0.194 (0.024) & 0.226 (0.037) & 0.139 (0.017) \\ 
		& 400 & 10  & 0.129 (0.005)  & \textbf{0.123 (0.006) }& 0.132 (0.012) & 0.127 (0.005) & 0.124 (0.005) \\ 
		& 400 & 50  & 0.156 (0.007)  & \textbf{0.126 (0.008) }& 0.180 (0.026) & 0.144 (0.008) & 0.127 (0.004) \\ 
		\hline
	\end{tabular}
\end{table*}

\section{MORE BACKGROUND ABOUT THE HBA1C CONTROL STUDY}\label{RealDataSupp}

Hemoglobin is an iron-containing oxygen which transports protein in the red blood cells. HbA1c, the most abundant minor hemoglobin in the human body, is formed when glucose accumulates in red blood cells and binds to the hemoglobin. This process occurs slowly and continuously over the lifespan of red blood cells, which is 120 days on average. This makes HbA1c an ideal biomarker of long-term glycemic control. Patients who are susceptible to high blood glucose are usually recommended to have their HbA1c levels regularly measured and recorded. 

HbA1c test, since became commercially available in 1978, has been one of the standard tools for monitoring diabetes progression. American Diabetes Association (ADA) recommended using A1c measurement in 1988. The Diabetes Control and Complications Trial (DCCT) demonstrated its importance as a predictor of diabetes-related outcomes, and the ADA started recommending specific A1c targets in 1994 \citep{little2011status}. In 2010, ADA added HbA1c$\geq 6.5\%$ as a criterion for diabetes diagnosis. The prevalence of HbA1c test can be partially explained by its clinic convenience. As HbA1c is unaffected by acute perturbations in glucose levels, there is no need for fasting or timed samples. For people without diabetes, the normal range for the HbA1c level is between 4\% and 5.6\%. HbA1c levels between 5.7\% and 6.4\% imply a higher chance of getting diabetes. Levels of 6.5\% or higher indicate the presence of diabetes.

Despite the popularity of HbA1c tests, HbA1c may not reflect the real progression of diabetes given various conditions that patients are predisposed to. Interfering factors may lead to false results of HbA1c tests \citep{radin2014pitfalls}. In some cases, the direction of bias is predictable while in other cases not. For instance, HbA1c can be falsely elevated by iron deficiency anemias, vitamin B-12 anemias, folate deficiency anemias, asplenia and other conditions associated with decreased red cell turnover \citep{nitin2010hba1c}. HbA1c measures may be lower for patients with conditions that shorten the life of the red blood cell or increase its turnover rate  \citep{freedman2010comparison}, including acute and chronic blood loss, hemolytic anemia, splenomegaly and end-stage renal disease. While in some other cases, the direction of bias is not always easily predictable. For example, the complex interplay of glycemic control and other treatments, such as erythropoietin therapy and the treatment of uremia may influence HbA1c level in a more case-by-case way \citep{radin2014pitfalls}.

Due to the susceptibility of HbA1c measures to various medical and physiological factors, HbA1c measures reflect glycemia differently for different patients. Therefore, the recommendation of a single HbA1c control level for all the patients may not always be desirable.  A lower glycemic level is not always better, and over-controlling glycemia can lead to faint or compromised life quality. Especially for patients who are older, with severe complications, or in the phase of early post-surgery rehabilitation, HbA1c is not supposed to be controlled for them as low as for younger and healthier patients. 

As a response to the progress in medical research,  the idea of glucose control has been evolving over time. In the past, an HbA1c of 7\% was considered the golden rule of health for everyone. In recent years, however, the importance of a patient-centered approach to managing HbA1c levels has been recognized, which may better correspond to the patient's needs for diabetes management and their personal conditions and preferences. To fill this gap, we introduce the HbA1c dataset to study the relationship between HbA1c control and the events of diabetic patients. Specifically, the sample is a subset of a type II diabetes dataset which is drawn from a large Midwestern multi-specialty physician and patient group.  Electronic Health Record (EHR) datasets are merged with Medicare claim data to obtain laboratory records, information on primary care visits, and medication history. All patients have been tracked for a specific time period from the first quarter of 2003 to the fourth quarter of 2011. The full observational records are not available for the majority of patients.

Patients identified with diabetes must have at least 1 inpatient or SNF Medicare claim or at least two carrier claims for diabetes, which should be less than 2 years apart. The diagnosis is defined by one or more of the following categories: the ICD-9 codes, Diabetes mellitus (250.xx), Polyneuropathy in diabetes (357.2), Diabetes retinopathy (362.0x), and Diabetic cataract (366.41). There are several types of patients  excluded from the dataset, including those who are without continuous coverage with Medicare Parts A \& B for a baseline year or for at least one subsequent quarter, without Medicare railroad benefits or were not enrolled in a Medicare HMO. Each patient needs to have at least 5 quarter time periods in one of the 3 insurers from 2003 to 2011 including 4 baseline quarters and a measurement period of at least 1 subsequent quarter. Also, the patients should have A1c records over the 5 quarter period which are available in the health system. In the sample of analysis, 51.3\% of patients are female and 91.8\% of the patients are white. The ages of participants at the cohort start time range from 18 to 102 years old, with an average of 65.7 and a standard deviation of 13.7.

We regard the HbA1c measurement as the treatment dose in this study. Observations are centered around the availability of HbA1c measurements. Each observation is associated with a unique HbA1c measurement from an individual patient following the baseline period. The outcome variable is the number of negative events, which include hospitalizations and Emergency Department (ED) visits that cover 90 days after the HbA1c measurement. In addition, we also include a list of short-term covariates that we average within 90 days prior to the HbA1c measurement as well as a list of long-term covariates that are averaged from up to 900 days prior to the HbA1c measurement. After removing subjects with missing baseline covariates, $8126$ patients remain. An illustration of the data structure can be found in Figure~\ref{fig:realdata-structure}. Each observation is generated from an outcome period (90 days after the HbA1c measure), a short-term covariate period (90 days before the HbA1c measure) and a long-term covariate period (including the baseline information).

\begin{figure}[!htbp]
	\centering
	\includegraphics[width=15cm]{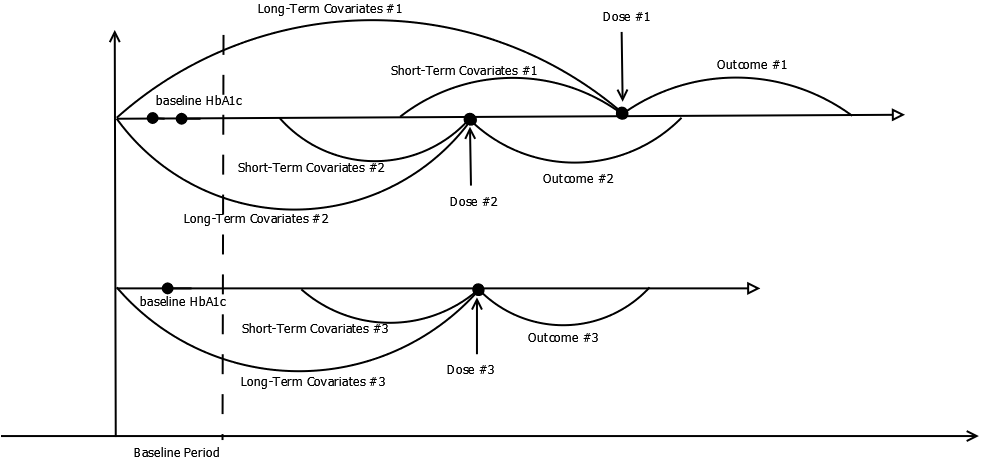}
	\caption{Illustration of Data Structure} \label{fig:realdata-structure}
\end{figure}

The covariates in the data set fall in the following categories.

\begin{enumerate}
	\item Demographics: age, gender, race
	
	\item Long-term measurements: height, the indicator of disability at baseline, length of enrollment, average of HbA1c in the past 91-900 days, the indicator of Medicaid, number of measurements recorded in past the 91-900 days,  Sulfonylureas intake, Insulin intake, and other medications intake 
	
	\item Short-term measurements: number of HbA1c measurements in the past 1-90 days; average of HbA1c in the past 1-90 days, number of hospitalization, number of ED-visits; low-density lipoprotein cholesterol, systolic blood pressure, diastolic blood pressure, BMI, weight, cardiovascular disease, ischemic heart disease, congestive heart failure, hepatocellular carcinoma,  hypoglycemia and injury, infections; and 18 other comorbidities.
	
\end{enumerate} 

\end{document}